\newtheorem{definition}{Definition}
\newtheorem{proposition}{Proposition}
\newtheorem{theorem}{Theorem}
\newenvironment{proof}{{\sc Proof:}}{ \indent}
\def\bbN{{\mathbb N}}
\def\bbZ{{\mathbb Z}}
\def\bbR{{\mathbb R}}
\def\bbC{{\mathbb C}}
\def\bbI{{\mathbb I}}
\def\P{{\mathbb P}}
\def\E{{\mathbb E}}
\def\what{\widehat}
\def\noi{\noindent}
\def\MSE{{\rm m.s.e.}}
\def\refeq#1{(\ref{e:#1})}
\begin{document}
%
%
%
\title{Global Modeling and Prediction of Computer Network Traffic}

\author{Stilian A. Stoev, George Michailidis, and 
Joel Vaughan}

%

\date{Dec 17, 2009 \\ {\it Department of Statistics, the University of Michigan}}

\maketitle

\begin{abstract}
 We develop a probabilistic framework for {\em global} modeling of the traffic over
 a computer network.  This model integrates existing single--link (--flow)
 traffic models with the routing over the network to capture the 
 global traffic behavior.  It arises from a limit approximation of the
 traffic fluctuations as the time--scale and the number of users sharing
 the network grow.  The resulting probability model is comprised of
 a Gaussian and/or a stable, infinite variance components.  They can be 
 succinctly described and handled by certain 'space--time' random fields.
 The model is validated against simulated and real data.  It is then 
 applied to predict traffic fluctuations over unobserved links from a 
 limited set of observed links.  Further, applications to anomaly detection and network
 management are briefly discussed.
\end{abstract}

%
%

\section{Introduction}

 Understanding the statistical behavior of computer network traffic
 has been an important and challenging problem for the past 15 years, because of its
 impact on network performance and provisioning \cite{lombardo:morabito:schembra:2004,
 rolls:campos:michailidis:2005,gong:liu:misra:towsley:2005,park:willinger:2000}
 and on the potential for development of more suitable protocols \cite{park:kim:crovella:1996R,park:willinger:2000}.
 Since the early 1990s it has been well established that the traffic over a {\em single link} exhibits intricate 
 temporal dependence, known as {\it burstiness}, which could not be explained by traffic models
 developed for telephone networks \cite{leland:taqqu:willinger:wilson:1993Sig}. This phenomenon could be 
 understood and described by using the notions of long--range dependence and self--similarity 
 \cite{erramilli:pruthi:willinger:1995S}, which in turn are affected by the presence of heavy tails
 in the distribution of file sizes  \cite{crovella:bestavros:1996,park:kim:crovella:1996R}.
 A bottom-up {\em mechanistic} model for single link network traffic that is in agreement with the empirical
 features observed in real network traces was presented in \cite{taqqu:willinger:sherman:1997}. A competing
 model based on queuing ideas was studied in \cite{mikosch:resnick:rootzen:stegeman:2002}. These works lead to
 many further developments (see eg \cite{park:willinger:2000}).
 
Advances in technology that allowed the acquisition of direct, through sampling \cite{duffield:2004,yang:michailidis:2007},
and indirect \cite{lawrence:etal:2006} measurements have allowed researchers to examine the
characteristics of traffic in entire networks \cite{lakhina:etal:2004, gong:liu:misra:towsley:2005,singhal:michailidis:2007,
xu:zhang:bhattacharyya:2005}, based on statistical modeling analysis. On the other hand, an analogue
of the mechanistic models available for single link network traffic is not available. Such a model would allow
better understanding of network performance \cite{field:harder:harrison:2004,lombardo:morabito:schembra:2004} and detection of anomalous
behavior \cite{paschalidis:smaragdakis:2009}. Further, it would manage to capture and explain statistical relationships
between flows traversing the network at all time scales (time) and across all links (space); the latter represents
a fairly tall requirement, which may also prove rather impractical given the underlying complexity (protocols, applications)
and heterogeneity (physical infrastructure, diverse users) of modern networks. 

Our objective in this paper is to propose a mechanistic model that captures several fundamental
characteristics of network-wide traffic
and thus constitutes a partial solution for this challenging problem. The model is based on modeling user behavior on
source--destination paths across the network and then aggregate over users and over time, thus developing a 
{\em joint} 'space--time' probability model for the traffic fluctuations over {\em all links} in the network.  This model 
reflects the statistical dependence of the traffic across different links, observed at the same or different points in time.
We demonstrate the success of our modeling strategy in the context of {\em network traffic prediction} -- a problem with 
important implications on network performance, provisioning, and management.

{\em The remainder of the paper is structured as follows.} In Section \ref{s:single-link}, we review 
briefly the existing and relatively well--understood theory of single--flow (link) models for the temporal 
dependence in network traffic.  Long--range dependence and heavy tails play a central role.  In Section 
\ref{s:network-wide}, we postulate our network--wide model based on combining single--flow models through
the routing equation.  We show that the scaling limit of such a model is a combination of fractional Brownian 
motions and infinite variance stable L\'evy motions.  A succinct representation of these processes is given in
Section \ref{s:f-fBm,f-Lsm} via the {\em functional fractional Brownian motion} and {\em functional L\'evy stable 
motion}. The resulting model is then used in Section \ref{s:kriging} to solve the {\em network kriging} problem, 
i.e.\ to predict the traffic fluctuations on a unobserved link from a limited set of measurements of observed links.
In Section \ref{s:NetFlow}, we use extensive NetFlow data of sampled network traffic to obtain approximations of
the flow--level traffic $X_j(t)$.  These data are then used to validate our model and demonstrate the success of 
the network kriging methodology.  We conclude in Section \ref{s:discussion} with some remarks on future applications,
statistical problems on networks, and further extensions of the network--wide probabilistic model.

\section{Problem Formulation}
\label{problem-formulation}

Consider a computer network of $L$ links and $N$ nodes.  The network typically 
carries traffic {\em flows} (via groups of packets) from any node (source) to any other node (destination) over a 
predetermined set of links (route).  This can be formally described by the {\em routing matrix}
$A = (a_{\ell j})_{L \times {\cal J}}$, where
$$
 a_{\ell j} = \left\{ \begin{array}{ll}
 1 &,\ \mbox{ {\em route} $j$ involves {\em link} $\ell$ }\\
 0 &,\ \mbox{ otherwise,}
 \end{array}\right.
$$
and where ${\cal J}$ is the total number of routes (typically, ${\cal J} =N(N-1)$).

{\em We describe next the physical premises of our modeling framework.}
We assume, for simplicity, that the traffic is {\em fluid}. That is, the amount of data (bytes)
transmitted over link $\ell$ during the time interval $(a,b)$ is $\int_a^b Y_\ell(t) dt$, where
$Y_\ell(t)$ is the {\em traffic intensity} (bytes per unit time) over link $\ell$.  Let also 
$X_j(t)$ denote the traffic intensity at time $t$ over route $j,\ 1\le j\le {\cal J}$.  Then,
assuming that traffic propagates instantaneously over the network, we obtain the following 
{\em routing equation}:
\begin{equation}\label{e:Y=AX}
 \vec Y(t) = A \vec X(t),
\end{equation}
where $\vec X(t) = (X_j(t))_{1\le j\le {\cal J}}$ and $\vec Y(t) = (Y_\ell(t))_{1\le \ell \le L}$. 
This relationship is valid only to the extent that traffic propagates {\em instantaneously} along
the routes. Thus, \eqref{e:Y=AX} cannot be adopted over the finest, high--frequency time scales where
packet delay plays a central role. On the other hand, for all practical purposes, the routing equation
holds over a wide range of time scales greater than the RTT (round trip time) for packets in the network
\cite{singhal:michailidis:2007,yang:michailidis:2007}.  This equation captures the fundamental
relationship between the traffic intensity over different routes in the network and the resulting load, 
incurred on the links.

From a physical perspective, the computer network is merely used to 'transport' information from source nodes to 
destination nodes.  In normal (uncongested) operating regime, the traffic is carried seamlessly and 
the traffic intensities $X_j(t)$ are driven solely by the {\em demand} along the routes $j,\ 1\le j \le {\cal J}.$  
Thus, as a first approximation one may view the $X_j(t)$'s as statistically independent in $j,\ 1\le j\le {\cal J}$.
Therefore, in view of \eqref{e:Y=AX}, the statistical dependence between $Y_{\ell_1}(t)$ and $Y_{\ell_2}(t)$ for two
links $\ell_1$ and $\ell_2$, is governed by the set of routes $X_j(t)$ that use both $\ell_1$ and $\ell_2$.

In view of the above discussion, guided by the routing equation \eqref{e:Y=AX}, we obtain a global model
for the traffic intensity $Y_\ell(t),\ 1\le \ell \le L,\ t\ge 0$.  The temporal dependence of the flow--level traffic 
$X_j(t)$ can be described well by the existing mechanistic models exhibiting long--range dependence and heavy tails
(see Section \ref{s:single-link}).  The independence of the $X_j(t)$'s in $j$ is a questionable assumption when
the network is not in equilibrium or it is congested.  Indeed, if two routes share a congested node, then the
feedback mechanism of TCP clearly induces dependence between the two flows.  Further, since every TCP session
involves ACK (acknowledgment) packets traversing along the {\em reverse} route, then in practice one expects 
dependence between the {\em forward} and {\em reverse} flows for a given pair of a source and destination. 
Our experience with NetFlow data for the Internet2 backbone network suggests however that for the present
utilization levels of the network (about 10\% to 20\%) the $X_j(t)$'s are nearly uncorrelated in $j$ (see Fig.\ 
\ref{fig:X-corr} in Section \ref{s:NetFlow}).  The correlation is strongest but still rather weak among the forward
and reverse flows (see also \cite{singhal:michailidis:2007}).

Therefore, as a first attempt to model globally the dependence structure of the network across all links and
in time, we advocate adopting the simple assumption of independence of the flow--level traffic.  Our methodology
can be extended to cover more complex scenarios of dependence between forward and reverse flows, as well as 
'second--order' effects of dependence between flows triggered by congestion.  This should be done with caution 
however since the chaotic behavior induced by the TCP feedback is not well--understood on network--wide level.

\subsection{A Brief Overview of Single Link Traffic Models}
\label{s:single-link}

We start with a brief review of single--link traffic models, since a number of their
features are incorporated into our network--wide model.
 Such models are built on the paradigm of multiple users
 sharing a link.  Depending on the regimes prevalent in the network,
 one obtains two qualitatively different asymptotic models for the
 cumulative traffic fluctuations. One regime leads to
 finite--variance, Gaussian models that exhibit long--range
 dependence and self--similarity.  The other regime yields infinite
 variance processes with independent increments. 

 \subsection{Activity rate models: two limit regimes}
 
 Consider a fixed route on the network and suppose that $M$ independent users share 
 this route.   Let $\{X(t)\}_{t\ge 0}$ denote the traffic intensity of one such user
 in bytes per unit time. Thus $\int_a^b X(t) dt$ is the total traffic (bytes)
 generated by the user during the time interval $(a,b)$. It is assumed that $\{X(t)\}_{t\ge 0}$ 
 is a strictly stationary stochastic process with finite mean.

 Following the framework in 
\cite{mikosch:samorodnitsky:2007}, let $\{ (T_j,Z_j)\}_{j\in \bbZ}$ be
a stationary marked point process of arrival times $T_j$'s in $\bbR$
with marks $Z_j$'s.  At time $T_j$, the user initiates a transmission
at constant unit rate, which lasts time $Z_j$.  Thus, the traffic
intensity at time $t$ equals:
\begin{equation}\label{e:act-rate}
  X(t) = \sum_{j\in\bbZ} I(T_j\le t< T_j + Z_j),
\end{equation}
where $ \cdots \le T_0 \le 0 \le T_1 \le \cdots$. One can recover the following two popular traffic
models as special cases:

\medskip
\begin{itemize}
\item {\em $M/G/\infty$ model:} If the $T_j$'s are arrival times of a Poisson point process with constant 
intensity, independent of the marks $Z_j$'s, then $\{X(t)\}_{t\ge 0}$ becomes the $M/G/\infty$ model.  

\item {\em On/Off model:} On the other hand, if the $Z_j$'s and the $T_j$'s are {\em dependent} and such that:
$$
 T_{j,{\rm on}} := T_j,\ \ T_{j,{\rm off}} := T_j + Z_j < T_{j+1} \equiv T_{j+1,{\rm on}},
$$
then $X(t)$ follows the {\em On/Off model}, i.e.\ a period of activity ('On') is followed by an idle period ('Off').
It is further assumed that the {\em On} periods: $U_{j,{\rm on}} := Z_j$ and the {\em Off} periods 
$U_{j,{\rm off}}:= T_{j+1} - T_j + Z_j$, are mutually independent and identically distributed with laws 
$F_{\rm on}(x) = \P\{ U_{1,{\rm on}} \le x\}$ and $F_{\rm off} (x) = \P\{U_{1, {\rm off}} \le x\}$.

\end{itemize}

%
%
\noi{\em Remark.} The initial {\em On} period $T_{1,\rm on}$ has such a distribution as
to ensure that $\{X(t)\}$ is stationary.  This can happen only if the {\em On} and {\em Off}
periods have finite means.  The work \cite{mikosch:resnick:2006} addresses the case
of activity rates with very heavy tails, which can have infinite means.

\medskip
In the context of network traffic, the {\em durations} of the user
activity $Z_j$'s are modeled with heavy tailed distributions with
finite mean but infinite variance, since they can be linked to the ubiquitous presence of heavy
tails in computer networks (file sizes, web pages, etc.\ see e.g.\
\cite{crovella:bestavros:1996,park:kim:crovella:1996R,crovella:taqqu:bestavros:1998}).
The heavy tailed nature of the durations, implies that the process
$X(t)$ of user activity is long--range dependent (LRD).  The intimate
connection between the long--range dependence phenomenon and
self--similarity provides an appealing mechanistic (physical)
explanation of the cause of burstiness in network traffic (see e.g.\
\cite{leland:taqqu:willinger:wilson:1993Sig,erramilli:pruthi:willinger:1995S,willinger:taqqu:sherman:wilson:1997TON},
\cite{park:willinger:2000} and the references therein).

For brevity, we focus on the {\em On/Off model} and suppose that the tails of the On and Off durations 
are heavy, i.e.\ as $x\to\infty$:
$$
1 - F_{\rm on}(x):= \overline{F}_{\rm on}(x) \sim c_{\rm on} x^{-\alpha_{\rm on}}\
 \mbox{ and }\ \overline{F}_{\rm Off}(x) = c_{\rm off} x^{-\alpha_{\rm off}},
$$ 
for some constants $c_{\rm on},\ c_{\rm off}>0$ and tail exponents, such that
\begin{equation}\label{e:alphas}
   1 < \alpha:= \min\{ \alpha_{\rm on}, \alpha_{\rm off}\} <2. 
\end{equation}
Relation \eqref{e:alphas} then implies that $X(t)$ is LRD with Hurst exponent
\begin{equation}\label{e:H}
 H = \frac{3 - \alpha}{2} \in (1/2,1),
\end{equation}
that is, for some constant $c_X>0$,
$$
{\rm Cov}(X(t), X(0)) \sim c_X t^{2H - 2},\ \ \mbox{ as }t\to\infty
$$
(see e.g.\ \cite{mikosch:samorodnitsky:2007}).

 \subsection{Multiple Sources Asymptotics: Long--Range Dependence and Heavy Tails}

 Let now $\{X^{(i)}(t)\}$,\ $1\le i\le M$ be independent and identically distributed stationary
 processes modeling the traffic intensities of $M$ users sharing a given route.  Then, the 
 cumulative traffic over the route generated by the users is:
 $$
  X^*(T,M):= \int_0^T \sum_{i=1}^M X^{(i)}(t) dt.
 $$
 We are interested in the asymptotic behavior of the cumulative traffic fluctuations about the mean:
$$
 X_0^*(T,M) :=   X^*(T,M) - \E X^{*}(T,M).
$$

 As shown in the seminal work of \cite{taqqu:willinger:sherman:1997},
 if the $X^{(i)}(t)$'s are {\em On/Off processes}, then 
\begin{equation}\label{e:fast-regime}
 {\cal L}\lim_{T\to\infty} \frac{1}{T^H} {\Big\{} {\cal L}\lim_{M\to\infty}
 \frac{1}{\sqrt{M}} X_0^*(Tt,M) {\Big\}}_{t\ge 0} = \{B_H(t)\}_{t\ge 0},
\end{equation}
where $B_H = \{B_H(t)\}_{t\ge 0}$ is a fractional Brownian motion (fBm) with self--similarity parameter $H$ as in 
\eqref{e:H} and where '${\cal L}\lim$' denotes finite--dimensional distributions convergence. 
Recall that the fBm $B_H$ is a zero mean Gaussian process with stationary increments, which is self--similar, i.e.\ 
for all $c>0$, we have $\{B_H(ct)\}_{t\ge 0} \stackrel{d}{=} \{c^H B_H(t)\}_{t\ge 0}$.  One necessarily has that
$H\in (0,1)$ and, for some $\sigma^2 = {\rm Var}(B_H(1))>0$:
$$
 {\rm Cov}(B_H(t), B_H(s) ) = \frac{\sigma^2}{2}{\Big(} |t|^{2H} + |s|^{2H} - |t-s|^{2H} {\Big)},\ t,s \ge 0
$$
(see e.g.\ \cite{samorodnitsky:taqqu:1994book}).

Relation \eqref{e:fast-regime} shows that the fluctuations of the cumulative traffic about its mean behave
asymptotically like the fractional Brownian motion, as the number of users $M$ and the time scale $T$ are 
sufficiently large.  The increments $G(k):= B_H(k) - B_H(k-1),\ k= 1,2,\dots,$ of fBm then can then serve as
a model for the traffic traces of the number of bytes transmitted over the network over certain, sufficiently large
time scales. 

 The order of the limits in \eqref{e:fast-regime} is important.  If one takes $T\to\infty$ first and then 
$M\to\infty$, as shown in \cite{taqqu:willinger:sherman:1997}, one obtains:
\begin{equation}\label{e:slow-regime}
 {\cal L}\lim_{M\to\infty} \frac{1}{M^{1/\alpha}} {\Big\{} {\cal L}\lim_{T\to\infty} \frac{1}{T^{1/\alpha}}
  X_0^*(Tt),M {\Big\}}_{t\ge 0} = \{\Lambda_\alpha(t)\}_{t\ge 0}.
\end{equation}
Now the limit process $\Lambda_\alpha = \{\Lambda_\alpha(t)\}_{t\ge 0}$ has {\it independent} and stationary 
increments with $\alpha-$stable distributions, with $\alpha$ being as in \eqref{e:alphas}.
 It is the L\'evy stable motion -- the infinite variance counterpart 
to the Brownian motion. 

Relations \eqref{e:fast-regime} and \eqref{e:slow-regime} show two different regimes for the network.  The 
first involves many users relative to the time scale and the second, just a few users relative to the time scale.
As shown in \cite{mikosch:resnick:rootzen:stegeman:2002} (see also 
\cite{gaigalas:kaj:2003,pipiras:taqqu:levy:2004,mikosch:samorodnitsky:2007}), one can consider the limit when
the number of users $M=M(T)$ grows to infinity, as a function of the time scale $T$.  Then:

\begin{itemize}
 
 \item {\em (fast growth)} If ${(M(T)T)^{1/\alpha}}/{T} \to \infty,$ as $T\to\infty$,
then 
$$
{\cal L}\lim_{T\to\infty}
 {\Big\{} \frac{1}{T^{H} \sqrt{M(T)}} X_0^{*}(Tt,M)  {\Big\}}_{t\ge 0} = \{B_H(t)\}_{t\ge 0}.
$$

\item {\em (slow growth)} If ${(M(T)T)^{1/\alpha}}/{T} \to 0,$ as $T\to\infty$,
then 
$$
{\cal L}\lim_{T\to\infty}
 {\Big\{} \frac{1}{(TM(T))^{1/\alpha}}X_0^{*}(Tt,M)  {\Big\}}_{t\ge 0}
  = \{\Lambda_\alpha(t)\}_{t\ge 0}.
$$
\end{itemize}

The {\em fast growth} scenario shows that if the number of users $M(T)$ grows relatively fast, then the same 
limit as in \eqref{e:fast-regime} is achieved.  The {\em slow growth} regime on the other hand, yields the stable 
L\'evy motion in the limit, when there are relatively few users sharing the link.
The intermediate regime when ${(M(T)T)^{1/\alpha}}/{T} \to c\in(0,\infty)$ is considered in \cite{gaigalas:kaj:2003}.

This abundant theory offers a multitude of tools for modeling the temporal dependence of traffic traces
in various regimes.  For example, the users need not be of the same type.  As in 
\cite{dauria:samorodnitsky:2005} one may consider $q$ classes of users $M_k,\ 1\le k\le q$, where
$M = \sum_{k=1}^q M_k$, and $M_k(T)\to \infty$ as $T\to\infty.$  The users within a given class are of the 
same type with parameters $\alpha_k \in (1,2),$ and $H_k:= (3-\alpha_k)/2,\ 1\le k\le q$.  By balancing the
rates of the $M_k(T)$'s one can obtain in the limit
$$
{\cal L}\lim_{T\to\infty} \{\frac{1}{a(T)} X_0^*(Tt,M) \}_{t\ge 0} = \sum_{k\in {\cal F} } B_{H_k} 
 + \sum_{k \in {\cal S}} \Lambda_{\alpha_k},
$$
where $B_{H_k} = \{B_{H_k}(t)\}_{t\ge 0}$ and $\Lambda_{\alpha_k} = \{\Lambda_{\alpha_k}(t)\}_{t\ge 0},\
1\le k\le q$ are independent fBm's and L\'evy stable motions, respectively.  Here
$\{1,\cdots,q\} = {\cal F}\cup {\cal S}$ is the partition of the groups of users into
subsets of {\em fast} and {\em slow} growth regimes, respectively.

Similar results were shown to hold for the $M/G/\infty$ and other activity rate models (see e.g.\
\cite{mikosch:samorodnitsky:2007}).
%
%
\medskip
\noi{\em Remarks.}
\begin{enumerate}
 \item If the individual user behavior is modeled by $M/G/\infty$ processes with heavy--tailed, infinite variance
 durations $Z_j$'s, then similar asymptotic results hold for the cumulative traffic fluctuations.  In fact,
 as shown in \cite{mikosch:samorodnitsky:2007}, this is so for the general activity rate model in 
 \eqref{e:act-rate}.  

 \item As argued above, by balancing the rates of multiple groups of users, one can obtain complex hybrid models,
 composed of fBm's and L\'evy stable motions.  In practice, however, typically one component dominates.  In fact,
 as shown in \cite{mikosch:samorodnitsky:2007}, the fBm limit is more robust than the L\'evy stable motion with 
 respect to the type and the regimes of the activity rate models considered.

 In fact, the fundamental theorem of Lamperti (see eg Theorem 2.1.1 in \cite{embrechts:maejima:2002}) implies 
 an interesting {\em robustness} and {\em homogeneity} property.  Namely, suppose that $X^{(i)}(t)$'s are all
{\em stationary } in $t$.  If the the time--scale limit
$$
 {\cal L}\lim_{T\to\infty} \{ \frac{1}{a(T)} X_0^{*}(Tt,M) \}_{t\ge 0}  = \{\xi(t,M)\}_{t\ge 0},
$$
is non--trivial, then it is necessarily {\em self--similar}.  That is,
$\{\xi(ct,M)\}_{t\ge 0} \stackrel{d}{=} \{c^H \xi(t,M)\},$ for all $c>0$ with some $H>0$.

This implies that if the number of users $M$ is either fixed or already large enough for the Gaussian 
asymptotics to hold, then the time--scaling limit is necessarily either a single L\'evy stable motion,
or a single fractional Brownian motion.

Thus the complex hybrid models involving sums of multiple fBm's and L\'evy stable motions are rather fragile.
That is, they may occur only if a careful balance between the rate of growth of the users and the time--scale 
is imposed.  In reality, the single--fBm and single--L\'evy stable motion provide good, first--order limit 
approximations of traffic fluctuations that remain valid under changes of time--scales.

This observation is the reason why we advocate studying first the simpler, self--similar models involving either
a single fBm or a single L\'evy stable motion.  Accounting for the hybrid models involves careful considerations 
of time--scales, which presents formidable statistical challenges.

\end{enumerate}

\section{Network--Wide Traffic Modeling}
 \label{s:network-wide}

   \subsection{Asymptotic Approximations}
 
 As discussed in the introduction, we assume that traffic is {\em fluid} and it propagates instantaneously
 through the network so that the {\em routing equation} \eqref{e:Y=AX} holds.  As in Section \ref{s:single-link},
 we model the traffic intensity $X_j(t)$ over route $j$ as a composition of $M_j$ independent users.  We suppose, 
 in addition that the $X_j(t)$'s are independent in $j$ and composed of $M_j$ independent and identically distributed 
 (i.i.d.) {\em On/Off} sources: 
 \begin{equation}\label{e:Xj=On/Off}
   X_j(t) = \sum_{i=1}^{M_j} X_{j}^{(i)} (t),\ \ 1\le j\le {\cal J}.
\end{equation}
 We then obtain the following results:

\medskip
 \begin{theorem} \label{t:fast-regime}
 Let $X_j(t)$'s be as in \eqref{e:Xj=On/Off}, where 
 the {\em On/Off components} have common parameter $\alpha$ as in \eqref{e:alphas}. 
 Suppose that $M_j \sim r(j) M,\ M\to\infty$, for some constants $r(j)>0$ and let $\vec Y(t)$ be as in \eqref{e:Y=AX}. 
 Then,
\begin{eqnarray}\label{e:fast-Y}
 & & {\cal L}\lim_{T\to\infty} {\cal L}\lim_{M\to\infty} 
   \frac{1}{T^H \sqrt{M}}\int_{0}^{Tt} (\vec Y(\tau) - \E\vec Y(0))d\tau \nonumber\\
& & \quad\quad\quad\quad\quad\quad\quad\quad   = \{A \vec B_H(t)\}_{t\ge 0},
\end{eqnarray}
where $\vec B_H(t) = (r(j) B_H^{(j)}(t) )_{1\le j\le {\cal J}}$ and $B_H^{(j)}(t)$'s are i.i.d.\ fBm's 
with parameter $H = (3-\alpha)/2 \in (1/2,1)$.
\end{theorem}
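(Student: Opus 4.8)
The plan is to reduce the multivariate, network--wide statement to the single--route convergence \eqref{e:fast-regime} by exploiting two features: the linearity of the routing equation and the assumed independence of the flows across routes. First I would use $\vec Y = A\vec X$ together with the strict stationarity of each $X_j$ (so that $\E\vec Y(\tau)=\E\vec Y(0)=A\,\E\vec X(0)$) to pull the deterministic matrix $A$ out of the integral and rewrite the centered, integrated load as
\begin{equation*}
  \int_0^{Tt}\big(\vec Y(\tau)-\E\vec Y(0)\big)\,d\tau = A\,\vec X_0^*(Tt,M),\qquad
  \vec X_0^*(Tt,M) := \big(X_{0,j}^*(Tt,M_j)\big)_{1\le j\le{\cal J}},
\end{equation*}
where $X_{0,j}^*(Tt,M_j)=\int_0^{Tt}\sum_{i=1}^{M_j}\big(X_j^{(i)}(\tau)-\E X_j^{(i)}(0)\big)\,d\tau$ is the centered cumulative traffic on route $j$. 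Since $A$ is a fixed $0/1$ matrix, the convergence of the normalized vector $(T^H\sqrt M)^{-1}\vec X_0^*(Tt,M)$ will transfer to the load by the continuous--mapping property of finite--dimensional--distribution convergence under a deterministic linear map.

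Next I would establish the componentwise iterated limit. For each fixed route $j$, writing $(T^H\sqrt M)^{-1} = \sqrt{M_j/M}\cdot(T^H\sqrt{M_j})^{-1}$ and using $M_j\sim r(j)M$, the scalar ratio tends to $\sqrt{r(j)}$, while $(T^H\sqrt{M_j})^{-1}X_{0,j}^*(Tt,M_j)$ is precisely the object whose iterated ($M_j\to\infty$ then $T\to\infty$) scaling limit is the fractional Brownian motion $B_H^{(j)}$ by \eqref{e:fast-regime}, with $H=(3-\alpha)/2$ fixed by the common On/Off tail exponent $\alpha$. Thus each coordinate converges in f.d.d.\ to $\sqrt{r(j)}\,B_H^{(j)}$, and $\sqrt{M_j/M}\to\sqrt{r(j)}$ supplies the per--route weight recorded in $\vec B_H$.

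To promote these marginal limits to a joint limit for the whole vector, I would invoke independence in $j$. Because $X_1,\dots,X_{\cal J}$ are independent, the joint characteristic function of any finite collection of coordinates and time points factorizes into a product of single--route characteristic functions at every stage of the double limit. Hence the inner limit $M\to\infty$ yields a vector of \emph{independent} Gaussian processes (for each fixed $T$), and the outer $T^{-H}$--rescaled limit $T\to\infty$ yields the vector $\vec B_H=\big(\sqrt{r(j)}\,B_H^{(j)}\big)_{1\le j\le{\cal J}}$ of independent fractional Brownian motions. Applying the fixed linear map $A$ as in the first step then gives \eqref{e:fast-Y}.

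The main obstacle I anticipate is not the algebra but the careful handling of the iterated limit: one must verify that taking $M\to\infty$ first genuinely produces a (not yet self--similar) Gaussian limit process for fixed $T$ whose covariance carries the correct long--range--dependence scaling $\sim t^{2H}$, so that the subsequent $T^{-H}$ normalization recovers fBm rather than a degenerate or ordinary--Brownian limit. This is exactly the content of the Taqqu--Willinger--Sherman theorem quoted in \eqref{e:fast-regime}, which I would invoke coordinatewise. The only genuinely new point is structural: independence across routes lets that single--route statement be lifted to a joint f.d.d.\ limit with no cross--route covariances to estimate, and the entire network--wide conclusion is then obtained simply by the deterministic routing map $A$.
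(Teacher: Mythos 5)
Your proposal is correct and follows essentially the same route the paper intends: the paper's entire ``proof'' is the remark that Theorems \ref{t:fast-regime} and \ref{t:slow-regime} follow readily from the single--flow results together with the continuous mapping theorem, and your write--up is precisely that argument made explicit --- pull the deterministic matrix $A$ out via the routing equation, apply the Taqqu--Willinger--Sherman limit \eqref{e:fast-regime} route by route, lift the marginal limits to a joint finite--dimensional--distribution limit using independence across routes (factorization of characteristic functions), and finish by continuous mapping under the fixed linear map $A$.

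There is, however, one substantive point where your conclusion differs from the stated theorem: you obtain the per--route weights $\sqrt{r(j)}$, whereas \eqref{e:fast-Y} asserts $\vec B_H(t) = (r(j)B_H^{(j)}(t))_{1\le j\le {\cal J}}$. Your constant is the correct one. Indeed, the inner central limit theorem yields, for coordinate $j$, a variance proportional to $\lim_{M\to\infty} M_j/M = r(j)$, so the $j$--th coordinate of the limit has variance $r(j)\,\sigma^2 |t|^{2H}$, which is that of $\sqrt{r(j)}\,B_H^{(j)}(t)$ and not of $r(j)\,B_H^{(j)}(t)$ (whose variance is $r(j)^2\sigma^2|t|^{2H}$). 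Thus the discrepancy is an error (or an unstated renormalization) in the paper's statement rather than a gap in your argument; to make everything consistent one should either replace $r(j)$ by $\sqrt{r(j)}$ in \eqref{e:fast-Y} --- and correspondingly replace $r(u)^{1/H}$ by $r(u)^{1/(2H)}$ in Proposition \ref{p:f-fBm} --- or else strengthen the hypothesis to $M_j \sim r(j)^2 M$. Aside from this constant, your handling of the iterated limit (inner CLT producing a Gaussian process with the single--source long--range--dependent covariance, outer $T^{-H}$ rescaling producing fBm) and of the joint convergence via independence is exactly what is needed.
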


 \begin{theorem} \label{t:slow-regime} Under the conditions of Theorem \ref{t:fast-regime}, we have
\begin{eqnarray}\label{e:slow-Y}
 & & {\cal L}\lim_{M\to\infty} {\cal L}\lim_{T\to\infty} 
   \frac{1}{T^{1/\alpha} M^{1/\alpha} }\int_{0}^{Tt} (\vec Y(\tau) - \E\vec Y(0))d\tau\nonumber \\
& & \quad\quad\quad\quad \quad\quad\quad\quad  = \{A \vec \Lambda_\alpha(t)\}_{t\ge 0},
\end{eqnarray}
where $\vec \Lambda_\alpha(t) = (r(j) \Lambda_\alpha^{(j)} (t) )_{1\le j\le {\cal J}}$ 
and $\Lambda_\alpha^{(j)}(t)$'s are i.i.d.\ L\'evy $\alpha-$stable motions.
\end{theorem}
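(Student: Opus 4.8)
The plan is to reduce the vector statement to the scalar single--route slow--growth limit \eqref{e:slow-regime}, which we may invoke, by exploiting two structural features of the model: the linearity of the routing equation \eqref{e:Y=AX} and the independence of the route--level traffic $X_j$ across $j$. First I would use linearity. Writing $\vec X_0^*(Tt,M) := \int_0^{Tt}(\vec X(\tau)-\E\vec X(0))d\tau$ for the centered, integrated route--level traffic, the routing equation \eqref{e:Y=AX} gives
$$
\frac{1}{T^{1/\alpha}M^{1/\alpha}}\int_0^{Tt}(\vec Y(\tau)-\E\vec Y(0))d\tau = A\left(\frac{1}{T^{1/\alpha}M^{1/\alpha}}\vec X_0^*(Tt,M)\right).
$$
Since $A$ is a fixed deterministic $L\times{\cal J}$ matrix, the map $\vec v\mapsto A\vec v$ is continuous and linear, and finite--dimensional--distribution (fidi) convergence is preserved under such maps (each output coordinate is a fixed linear combination, so one applies the Cram\'er--Wold device). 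Hence it suffices to establish the iterated fidi limit for the route--level vector, namely
$$
{\cal L}\lim_{M\to\infty}{\cal L}\lim_{T\to\infty}\frac{1}{T^{1/\alpha}M^{1/\alpha}}\vec X_0^*(Tt,M) = \{\vec\Lambda_\alpha(t)\}_{t\ge0},
$$
after which one applies $A$ to both sides.

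Next I would treat this route--level vector coordinate by coordinate. By \eqref{e:Xj=On/Off}, the $j$--th component $X_{0,j}^*(Tt,M_j)$ is exactly the centered, integrated cumulative traffic of $M_j$ i.i.d.\ On/Off users on a single route with common parameter $\alpha$ as in \eqref{e:alphas}. Freezing $M$ (hence $M_j$), sending $T\to\infty$ first and $M\to\infty$ afterwards, the scalar result \eqref{e:slow-regime} applied with $M_j$ users gives
$$
{\cal L}\lim_{M_j\to\infty}\frac{1}{M_j^{1/\alpha}}\left\{{\cal L}\lim_{T\to\infty}\frac{1}{T^{1/\alpha}}X_{0,j}^*(Tt,M_j)\right\}_{t\ge0} = \{\Lambda_\alpha^{(j)}(t)\}_{t\ge0}.
$$
Because $M_j\sim r(j)M$ with $r(j)>0$ we have $M_j\to\infty$ as $M\to\infty$ and $M_j^{1/\alpha}/M^{1/\alpha}\to r(j)^{1/\alpha}$, so a Slutsky argument converts the $M_j^{1/\alpha}$ normalization into $M^{1/\alpha}$ and produces the deterministic factor $r(j)^{1/\alpha}$; thus the $j$--th coordinate of the limit is $r(j)^{1/\alpha}\Lambda_\alpha^{(j)}$, the $j$--th entry of $\vec\Lambda_\alpha$ (the growth constant $r(j)$ entering the scale of the limiting motion through the stable index $\alpha$). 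Finally, since the $X_j$'s are independent in $j$ (Section \ref{problem-formulation}), the limiting motions $\Lambda_\alpha^{(j)}$ are mutually independent, and coordinatewise fidi convergence of independent components upgrades to joint fidi convergence of the vector because the joint characteristic function factorizes. Combining this with the reduction through $A$ yields \eqref{e:slow-Y}.

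The genuinely delicate point is respecting the \emph{order} of the iterated limit: one must send $T\to\infty$ with $M$ frozen and only afterwards let $M\to\infty$, since it is the inner time--scale passage that manufactures the independent--increment, $\alpha$--stable structure of $\Lambda_\alpha$, whereas a simultaneous or diagonal passage $M=M(T)\to\infty$ would instead land in one of the growth regimes discussed after \eqref{e:slow-regime} and could produce fractional Brownian motion rather than $\Lambda_\alpha$ (this is precisely the contrast with Theorem \ref{t:fast-regime}). The remaining bookkeeping --- continuity of $\vec v\mapsto A\vec v$, the Slutsky step turning $M_j^{1/\alpha}$ into $M^{1/\alpha}$, and the factorization of the joint characteristic function under route independence --- is routine once the scalar input \eqref{e:slow-regime} and the independence assumption on the $X_j$'s are in hand.
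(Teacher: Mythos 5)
Your proposal is correct and takes essentially the same route as the paper: the paper's entire ``proof'' of Theorem \ref{t:slow-regime} is the remark that it ``follows readily from the well--known single--flow results with an application of the continuous mapping theorem,'' and your write--up is exactly that argument made explicit --- linearity of \eqref{e:Y=AX} plus Cram\'er--Wold for the fixed map $\vec v\mapsto A\vec v$, the scalar slow--growth limit \eqref{e:slow-regime} applied route by route with $M$ frozen during the inner $T\to\infty$ passage, a Slutsky step for $M_j\sim r(j)M$, and factorization of joint characteristic functions (via independence of the $X_j$'s in $j$) to upgrade coordinatewise to joint finite--dimensional convergence. Your emphasis on the order of the iterated limits is also the right thing to stress.

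One point deserves flagging rather than the silent identification you make at the end of your coordinatewise step. Your Slutsky argument produces the coefficient $r(j)^{1/\alpha}$, so the limit you actually obtain is $A\,(r(j)^{1/\alpha}\Lambda_\alpha^{(j)}(t))_{1\le j\le{\cal J}}$, whereas the theorem as printed defines $\vec\Lambda_\alpha(t)=(r(j)\Lambda_\alpha^{(j)}(t))_{1\le j\le{\cal J}}$; these differ unless $r\equiv 1$. The constant you derived is the one the computation genuinely yields (the same issue occurs in Theorem \ref{t:fast-regime}, where the analogous computation gives $r(j)^{1/2}B_H^{(j)}$ rather than $r(j)B_H^{(j)}$, and the paper's own Proposition \ref{p:f-Lsm} carries the exponent $1/\alpha$ on $r$), so the mismatch appears to be a slip in the paper's bookkeeping of $r(j)$ rather than an error in your reasoning. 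Still, a careful proof should state that the limit is $(r(j)^{1/\alpha}\Lambda_\alpha^{(j)})_j$ and explicitly note that the theorem's $\vec\Lambda_\alpha$ must be read with this normalization, instead of asserting without comment that $r(j)^{1/\alpha}\Lambda_\alpha^{(j)}$ ``is the $j$--th entry of $\vec\Lambda_\alpha$.''
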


Theorems \ref{t:fast-regime} and \ref{t:slow-regime} correspond, respectively,
to the {\em fast} and {\em slow} regime asymptotics in the single--flow case.  Their proofs follow readily 
from the well--known single--flow results with an application of the continuous mapping theorem.

   \subsection{A representation via functional L\'evy and 
      functional fractional Brownian motions} \label{s:f-fBm,f-Lsm}

 In this section, we introduce two classes of stochastic processes, indexed by functions, which can be used to
 succinctly represent the limit processes arising in Theorems \ref{t:fast-regime} and \ref{t:slow-regime}.
 The purpose of this more abstract treatment is to develop tools and insight that can be used in statistical 
 inference for the network models.

\noi {\bf Functional fBm:} Consider a measure space $(E,\mu)$ and the set of functions 
$$L^{2H}(\mu) = \{f:E\to\bbR,\ \|f\|_{2H}^{2H} := \int_E |f|^{2H} d\mu <\infty\},
$$
where $H\in (0,1)$.  Introduce the functional
\begin{equation}\label{e:phi}
 \phi_{2H}(f,g) := \frac{\sigma^2}{2} {\Big(} \|f\|_{2H}^{2H} + \|g\|_{2H}^{2H} - \|f-g\|_{2H}^{2H} {\Big)},
\end{equation}
for $f,g\in L^{2H}(\mu)$ and $\sigma>0$.

The functional $(f,g)\mapsto \phi(f,g)$ resembles the auto--covariance function of an fBm.  It turns out that
$\phi(f,g)$ is positive semi--definite (see Proposition \ref{p:pos-def} in the Appendix).  One can thus 
define a Gaussian process with covariance $\phi_{2H}$:

\smallskip
 \begin{definition} Let $0< H\le 1$.
 A zero mean Gaussian process $B = \{B(f)\}_{f\in L^{2H}(\mu)}$ indexed by the functions
 $f\in L^{2H}(\mu)$ is said to be a {\em functional fractional Brownian motion} (f--fBm), if:
 $$
  {\rm Cov}(B(f),B(g)) = \E B(f) B(g) = \phi_{2H}(f,g),\ \ f,g\in L^{2H}(\mu).
 $$
\end{definition}

It turns out that the limit process in Theorem \ref{t:fast-regime} can be expressed in terms of a 
{\em functional fBm}. Indeed, let $E = \{1,\cdots, {\cal J}\}$ and let the measure $\mu$ be the counting
measure on $E$.   Consider the f--fBm $B = \{B(f)\}_{f\in L^{2H}(\mu)}.$  

\smallskip
\begin{proposition}\label{p:f-fBm} For the limit process in \eqref{e:fast-Y}, we have
$$
 \{ A \vec B_H(t) \}_{t\ge 0}  \stackrel{d}{=}  {\Big\{}  (B(t f_\ell))_{1\le \ell\le L} {\Big\}}_{t\ge 0}.
$$ 
Here $f_\ell (u) = r(u)^{1/H} 1_{A_\ell}(u),$ where $A_\ell\subset \{1,\cdots,{\cal J}\}$ denotes the set 
of routes that use link $\ell,\ 1\le \ell \le L$ and $r(u),\ 1\le u \le {\cal J}$ is as in 
Theorem \ref{t:fast-regime}.
\end{proposition}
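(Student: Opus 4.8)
The plan is to exploit that both sides are zero-mean, vector-valued Gaussian processes, so that equality of all finite-dimensional distributions reduces to equality of their covariance functions. Indeed, $\vec B_H(t) = (r(j) B_H^{(j)}(t))_{1\le j\le {\cal J}}$ is Gaussian, hence so is its linear image $A\vec B_H(t)$; and $B = \{B(f)\}_{f\in L^{2H}(\mu)}$ is by definition a Gaussian process, so the whole collection $\{B(t f_\ell)\}_{t\ge 0,\,1\le \ell\le L}$ is jointly Gaussian with mean zero. Thus it suffices to verify, for every pair of links $\ell_1,\ell_2$ and all $t,s\ge 0$, that
$$
{\rm Cov}\big( (A\vec B_H(t))_{\ell_1},\, (A\vec B_H(s))_{\ell_2}\big) = {\rm Cov}\big( B(t f_{\ell_1}),\, B(s f_{\ell_2})\big),
$$
after fixing the common normalization $\sigma^2 = {\rm Var}(B_H^{(j)}(1))$ in the definition of $\phi_{2H}$.

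For the left-hand side, I would first write out the $\ell$-th component of $A\vec B_H(t)$ using the routing matrix: since $a_{\ell j}=1$ exactly when $j\in A_\ell$,
$$
(A\vec B_H(t))_\ell = \sum_{j=1}^{{\cal J}} a_{\ell j}\, r(j) B_H^{(j)}(t) = \sum_{j\in A_\ell} r(j) B_H^{(j)}(t).
$$
Because the $B_H^{(j)}$ are i.i.d.\ (hence independent across $j$) with ${\rm Cov}(B_H^{(j)}(t),B_H^{(j)}(s)) = \frac{\sigma^2}{2}(t^{2H}+s^{2H}-|t-s|^{2H})$, all cross terms with $j\ne k$ vanish and the double sum collapses to a sum over the shared routes:
$$
{\rm Cov}\big( (A\vec B_H(t))_{\ell_1}, (A\vec B_H(s))_{\ell_2}\big) = \Big(\sum_{u\in A_{\ell_1}\cap A_{\ell_2}} r(u)^2\Big)\frac{\sigma^2}{2}\big(t^{2H}+s^{2H}-|t-s|^{2H}\big).
$$

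For the right-hand side, the key is the elementary identity $(r(u)^{1/H})^{2H}=r(u)^2$, which is precisely why the exponent $1/H$ is built into $f_\ell$. Since $\mu$ is the counting measure on $\{1,\dots,{\cal J}\}$, this gives $\|t f_\ell\|_{2H}^{2H} = t^{2H}\sum_{u\in A_\ell} r(u)^2$; and, writing $t f_{\ell_1}(u) - s f_{\ell_2}(u) = r(u)^{1/H}(t\,1_{A_{\ell_1}}(u) - s\,1_{A_{\ell_2}}(u))$ and partitioning $\{1,\dots,{\cal J}\}$ into the four disjoint regions $A_{\ell_1}\cap A_{\ell_2}$, $A_{\ell_1}\setminus A_{\ell_2}$, $A_{\ell_2}\setminus A_{\ell_1}$, and the complement, one obtains
$$
\|t f_{\ell_1} - s f_{\ell_2}\|_{2H}^{2H} = |t-s|^{2H}\sum_{u\in A_{\ell_1}\cap A_{\ell_2}} r(u)^2 + t^{2H}\sum_{u\in A_{\ell_1}\setminus A_{\ell_2}} r(u)^2 + s^{2H}\sum_{u\in A_{\ell_2}\setminus A_{\ell_1}} r(u)^2.
$$
Substituting these three norms into $\phi_{2H}(tf_{\ell_1},sf_{\ell_2}) = \frac{\sigma^2}{2}(\|tf_{\ell_1}\|_{2H}^{2H}+\|sf_{\ell_2}\|_{2H}^{2H}-\|tf_{\ell_1}-sf_{\ell_2}\|_{2H}^{2H})$, the contributions over $A_{\ell_1}\setminus A_{\ell_2}$ and over $A_{\ell_2}\setminus A_{\ell_1}$ cancel exactly, leaving only the intersection term and reproducing the left-hand covariance. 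I expect the only delicate point to be this bookkeeping of the four regions together with the verification of the cancellation; everything else (the finiteness $f_\ell\in L^{2H}(\mu)$, the Gaussianity, and the reduction to matching covariances) is immediate, and there is no genuine analytic obstacle since the entire computation is finite-dimensional owing to $\mu$ being counting measure on a finite set.
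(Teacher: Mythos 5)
Your proposal is correct and follows essentially the same route as the paper's own proof: reduce equality in distribution of the two zero-mean Gaussian families to equality of covariances, compute the left side via independence of the $B_H^{(j)}$'s (so only routes in $A_{\ell_1}\cap A_{\ell_2}$ survive), and compute $\phi_{2H}(tf_{\ell_1},sf_{\ell_2})$ under the counting measure, where the terms over $A_{\ell_1}\setminus A_{\ell_2}$ and $A_{\ell_2}\setminus A_{\ell_1}$ cancel. The only difference is that you spell out the four-region bookkeeping that the paper compresses into the phrase ``after cancellations.''
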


\noi 
The proof is given in the Appendix.  The next result shows the basic properties of f--fBm's.

\begin{proposition}\label{p:properties} Let $H\in (0,1]$ and 
$B = \{B(f)\}_{f\in L^{2H}(\mu)}$ be f--fBm.

{\it (i)} The process $B$ is $H-$self--similar:
 \begin{equation}\label{e:H-ss}
  \{B(cf)\}_{f\in L^{2H}(\mu)} \stackrel{d}{=}\{c^H B(f)\}_{f\in L^{2H}(\mu)},\ \ (\forall c>0),
 \end{equation}
where $\stackrel{d}{=}$ denotes equality of the finite--dimensional distributions.

{\it (ii)} The process $B$ has stationary increments:
\begin{equation}\label{e:stat-inc}
 \{B(f+h) - B(h)\}_{f\in L^{2H}(\mu)} \stackrel{d}{=} \{ B(f)\}_{f\in L^{2H}(\mu)},
\end{equation}
for all $h\in L^{2H}(\mu)$.

{\it (iii)} If $fg =0$ $\mu-$a.e., then $B(f)$ and $B(g)$ are independent.

{\it (iv)}  $B_f(t):= B(tf),\ t\in\bbR$ is an ordinary fBm process.

{\it (v)} If $ H \not=1$, then $B(f) + B(g) = B(f+g)$, almost surely, {\em if and only if}
 $fg=0$, $\mu-$a.e.  (Note that by {\it (ii)} above, we always have
 $B(f) - B(g) \stackrel{d}{=} B(f-g)$.)
\end{proposition}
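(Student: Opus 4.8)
The organizing principle is that $B$ is a centred Gaussian process, so its finite--dimensional distributions are completely determined by the covariance functional $\phi_{2H}$. Every statement therefore reduces to a computation with $\phi_{2H}$: an equality in law between two linear images of $B$ reduces to matching their covariances; the almost--sure identity in (v) reduces to the vanishing of a single variance; and, since jointly Gaussian uncorrelated variables are independent, (iii) reduces to $\phi_{2H}(f,g)=0$. Throughout I write $N(f):=\|f\|_{2H}^{2H}=\int_E|f|^{2H}\,d\mu$ and use only the homogeneity $N(cf)=|c|^{2H}N(f)$ and the evenness $N(-f)=N(f)$.

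Parts (i), (ii), (iv) are then direct covariance matchings. For (i), $\phi_{2H}(cf,cg)=c^{2H}\phi_{2H}(f,g)$ by homogeneity of $N$, which is exactly the covariance of $\{c^HB(f)\}$, so the two centred Gaussian laws coincide. For (ii), I expand ${\rm Cov}\big(B(f+h)-B(h),B(g+h)-B(h)\big)$ into four $\phi_{2H}$--terms by bilinearity; using $(f+h)-(g+h)=f-g$ and $N(-g)=N(g)$, the $N(f+h)$, $N(g+h)$ and $N(h)$ contributions all cancel and what survives is $\tfrac{\sigma^2}{2}(N(f)+N(g)-N(f-g))=\phi_{2H}(f,g)={\rm Cov}(B(f),B(g))$. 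For (iv), ${\rm Cov}(B(tf),B(sf))=\tfrac{\sigma^2N(f)}{2}\big(|t|^{2H}+|s|^{2H}-|t-s|^{2H}\big)$, which is precisely the covariance of a (two--sided) fBm of self--similarity index $H$ with ${\rm Var}(B_f(1))=\sigma^2N(f)$; being centred Gaussian, $B_f$ \emph{is} such an fBm. Statement (iii) is immediate: if $fg=0$ $\mu$--a.e.\ then $f,g$ have essentially disjoint supports, whence $|f-g|^{2H}=|f|^{2H}+|g|^{2H}$ a.e., so $N(f-g)=N(f)+N(g)$ and $\phi_{2H}(f,g)=0$; uncorrelatedness of jointly Gaussian variables gives independence.

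The real content is (v). Expanding the variance of $B(f)+B(g)-B(f+g)$ by bilinearity, substituting $\phi_{2H}$, and using $N(f-(f+g))=N(g)$, $N(g-(f+g))=N(f)$, $N(0)=0$, the expression collapses after cancellation to
\begin{equation*}
 \E\big(B(f)+B(g)-B(f+g)\big)^2=\sigma^2\Big(2N(f)+2N(g)-N(f+g)-N(f-g)\Big).
\end{equation*}
Since the left side is the variance of a centred Gaussian variable, the asserted a.s.\ identity is equivalent to the vanishing of $\int_E\big(2|f|^{2H}+2|g|^{2H}-|f+g|^{2H}-|f-g|^{2H}\big)\,d\mu$. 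Everything thus hinges on the pointwise inequality, for $p:=2H\in(0,2)$ and real $a,b$,
\begin{equation*}
 2|a|^{p}+2|b|^{p}-|a+b|^{p}-|a-b|^{p}\ \ge\ 0,\qquad\text{with equality iff } ab=0 .
\end{equation*}

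The clean route, and the one step that needs an idea rather than bookkeeping, is the representation valid exactly for $0<p<2$,
\begin{equation*}
 |x|^{p}=C_{p}\int_{0}^{\infty}\frac{1-\cos(xu)}{u^{1+p}}\,du,\qquad C_{p}>0 .
\end{equation*}
Substituting it and simplifying the numerator via $\cos(a+b)u+\cos(a-b)u=2\cos(au)\cos(bu)$ factorises it as $2(1-\cos au)(1-\cos bu)\ge 0$, giving
\begin{equation*}
 2|a|^{p}+2|b|^{p}-|a+b|^{p}-|a-b|^{p}=2C_{p}\int_{0}^{\infty}\frac{(1-\cos au)(1-\cos bu)}{u^{1+p}}\,du\ \ge\ 0,
\end{equation*}
where the integrand is strictly positive for a.e.\ $u$ unless $a=0$ or $b=0$, yielding the equality condition $ab=0$. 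Integrating over $(f(\omega),g(\omega))$ then proves (v), with equality precisely when $fg=0$ $\mu$--a.e. The hypothesis $H\neq1$ is essential: at $H=1$ (i.e.\ $p=2$) the parallelogram law forces $2N(f)+2N(g)-N(f+g)-N(f-g)\equiv0$, so $B$ is then the linear (isonormal) Gaussian process and the equivalence degenerates. Finally, the parenthetical remark is just (ii) applied with increment $f-g$ at base point $g$, giving $B(f)-B(g)=B((f-g)+g)-B(g)\stackrel{d}{=}B(f-g)$. I expect the sharp pointwise inequality and its equality case to be the only genuine obstacle, and the Fourier representation to dispose of it by producing the manifestly nonnegative, transparently sharp factorised form.
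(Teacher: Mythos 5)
Your proof is correct, and for parts (i)--(iv) it coincides with the paper's own argument: everything is reduced to matching the covariance functional $\phi_{2H}$, with (i) following from homogeneity of $\|\cdot\|_{2H}^{2H}$, (ii) from the four-term expansion and cancellation, (iii) from $\|f-g\|_{2H}^{2H}=\|f\|_{2H}^{2H}+\|g\|_{2H}^{2H}$ for disjointly supported $f,g$ plus Gaussianity, and (iv) by restriction to the line $\{tf\}$. For (v) you also make the same reduction as the paper, computing
$\E\big(B(f)+B(g)-B(f+g)\big)^2=\sigma^2\big(2\|f\|_{2H}^{2H}+2\|g\|_{2H}^{2H}-\|f+g\|_{2H}^{2H}-\|f-g\|_{2H}^{2H}\big)$,
but you diverge at the decisive step: the paper disposes of the resulting pointwise inequality $2|a|^{p}+2|b|^{p}-|a+b|^{p}-|a-b|^{p}\ge 0$ for $0<p<2$, with equality iff $ab=0$, by citing Eq.\ (2.7.9) in Lemma 2.7.14 of Samorodnitsky and Taqqu, whereas you prove it from scratch via the representation $|x|^{p}=C_{p}\int_{0}^{\infty}(1-\cos(xu))u^{-1-p}\,du$ and the factorization of the integrand as $2(1-\cos au)(1-\cos bu)\ge 0$. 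Your route buys a self-contained argument whose equality case is transparent (for $a,b\neq 0$ the zero set of the integrand is countable, so the integral is strictly positive), and it makes explicit exactly where the hypothesis $H\neq 1$ enters --- at $p=2$ the parallelogram identity annihilates the expression identically --- a point the paper leaves buried in the citation. The paper's route is shorter and leans on a standard reference; yours is the one a reader can verify line by line.
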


\noi 
The proof is given in the Appendix.

Now, to gain more intuition behind the role of f--fBm in representing the limit process in Theorem 
\ref{t:fast-regime}, suppose that $r(j) = 1$ therein, i.e.\ all routes involve the same number 
of users $M_j = M$.  Consider the random variables $B(t f_{\ell_1} )$ and $B(s f_{\ell_2} )$ representing the
asymptotic cumulative fluctuations of traffic over links $\ell_1$ and $\ell_2$ respectively.  
Since $f_\ell = 1_{A_\ell}$ is merely an indicator function, we have:
\begin{eqnarray}\label{e:space-time-cov}
& & \E B(tf_{\ell_1}) B(s f_{\ell_2}) = 
 \frac{\sigma^2}{2}{\Big(} |t|^{2H}\mu(A_{\ell_1} ) + |s|^{2H} \mu(A_{\ell_2})\nonumber\\
& & \quad \quad \quad -|t-s|^{2H}\mu(A_{\ell_1}\cap A_{\ell_2}) - |t|^{2H} \mu(A_{\ell_1}\setminus A_{\ell_2})\nonumber\\
& & \quad\quad\quad - |s|^{2H} \mu(A_{\ell_2}\setminus A_{\ell_1}) {\Big)}\nonumber\\
&  & = \mu(A_{\ell_1}\cap A_{\ell_2}) \frac{\sigma^2}{2}(|t|^{2H} + |s|^{2H} - |t-s|^{2H}).
\end{eqnarray}
Recall that $A_{\ell}\subset\{1,\cdots,{\cal J}\}$ is the set of all routes that involve link $\ell$.  Thus,
the last relation has the following natural interpretation.  The spatial dependence between the links $\ell_1$ and 
$\ell_2$ is governed solely by the routes they have in common, i.e.\ the set $A_{\ell_1}\cap A_{\ell_2}$.
On the other hand, the temporal dependence follows the fBm model.  In particular,
 $B(t f_{\ell_1})$ and $B(t f_{\ell_2})$ are independent if and only if links 
$\ell_1$ and $\ell_2$ have no common routes, i.e.\ $\mu(A_{\ell_1}\cap A_{\ell_2}) =0$.

\medskip
\noi{\bf Functional L\'evy stable motion:} As for f--fBm, consider the measure space $(E,\mu)$ and
 the set of functions $L^{1}(\mu)$.

 \begin{definition} Let $\alpha\in (1,2)$. Consider a zero--mean $\alpha-$stable measure
 $M_\alpha(dx,du)$ on $\bbR \times E$ with control measure $dx \times \mu(du)$ (see \cite{samorodnitsky:taqqu:1994book}).
 Let
$$
 \Lambda(f) := \int_{\bbR\times E} {\Big(}1_{(-\infty,f(u)]}(x) - 1_{(-\infty,0)}(x) {\Big)} M_\alpha(dx,du),
$$
for any $f\in L^1(\mu)$. The process $\{\Lambda(f)\}_{f\in L^1(\mu)}$, 
indexed by the functions $f\in L^\alpha(\mu)$ is said to be a {\em functional L\'evy stable motion} (f--Lsm).
\end{definition}

As for f--fBm, we have:

\begin{proposition}\label{p:f-Lsm} For the limit process in \eqref{e:slow-Y}, we have
$$
 \{ A \vec \Lambda_\alpha(t) \}_{t\ge 0}  \stackrel{d}{=}
   {\Big\{}  (\Lambda(t f_\ell))_{1\le \ell\le L} {\Big\}}_{t\ge 0}.
$$ 
Here $f_\ell (u) = r(u)^{1/\alpha} 1_{A_\ell}(u),$ where $A_\ell\subset \{1,\cdots,{\cal J}\}$ is the set 
of routes involving link $\ell,\ 1\le \ell \le L$ and $r(u),\ 1\le u \le {\cal J}$ is as in 
Theorem \ref{t:fast-regime}.
\end{proposition}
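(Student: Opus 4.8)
The plan is to mirror the proof of Proposition \ref{p:f-fBm}, replacing the covariance computation that characterizes a Gaussian law by a characteristic--function computation adapted to the stable setting. Both $\{(A\vec\Lambda_\alpha(t))_\ell\}$ and $\{(\Lambda(tf_\ell))_\ell\}$ are collections of jointly $\alpha$--stable random variables: the left side because it is a fixed linear image under $A$ of independent scalar multiples of i.i.d.\ L\'evy stable motions, and the right side because, by the linearity of the defining integral for $\Lambda$, every finite linear combination $\sum_k c_k\Lambda(t_k f_{\ell_k})$ is itself a single stochastic integral against $M_\alpha$ and hence $\alpha$--stable. Since the law of a jointly $\alpha$--stable vector is determined by the one--dimensional stable law of each of its linear combinations, it suffices to show that for every finite family of links $\ell_k$, times $t_k$ and coefficients $c_k$, the linear combinations formed from the two sides have the same scale parameter (and, if $M_\alpha$ is skewed, the same skewness parameter).

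First I would write each side as a single integral against a stable random measure on $\bbR\times E$, where $E=\{1,\dots,{\cal J}\}$ carries the counting measure $\mu$. For the right side this is immediate: $\sum_k c_k\Lambda(t_k f_{\ell_k}) = \int_{\bbR\times E} H\, dM_\alpha$ with $H(x,u) = \sum_k c_k(1_{(-\infty,\,t_k f_{\ell_k}(u)]}(x) - 1_{(-\infty,0)}(x))$. For the left side I would represent each L\'evy stable motion through its own independent stable random measure on $\bbR$ with Lebesgue control, $\Lambda_\alpha^{(j)}(t) = \int_\bbR(1_{(-\infty,t]}(x)-1_{(-\infty,0)}(x))\,M_\alpha^{(j)}(dx)$, and assemble the independent $M_\alpha^{(j)}$ over $j\in E$ into a single stable measure with control $dx\times\mu$. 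Using $(A\vec\Lambda_\alpha(t))_\ell = \sum_{j\in A_\ell} r(j)\Lambda_\alpha^{(j)}(t)$, the same linear combination becomes $\int_{\bbR\times E} G\, dM_\alpha$ with $G(x,u) = r(u)\sum_{k:\,u\in A_{\ell_k}} c_k(1_{(-\infty,t_k]}(x)-1_{(-\infty,0)}(x))$.

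By the stable--integral calculus of \cite{samorodnitsky:taqqu:1994book}, the two resulting stable variables are equal in law as soon as $\int|G|^\alpha\,dx\,d\mu = \int|H|^\alpha\,dx\,d\mu$ (and, in the skewed case, the analogous signed integrals agree); when $M_\alpha$ is symmetric, matching the scale functional $\|\cdot\|_\alpha^\alpha$ alone suffices. The crux is therefore to verify this equality slice by slice in $u=j$. In $G$ the weight $r(j)$ multiplies the whole inner kernel, whereas in $H$ the factor carried by $f_{\ell_k}$ enters only as a rescaling of the upper threshold $t_k\mapsto t_k f_{\ell_k}(j)$; the two are reconciled by the change of variables $x\mapsto c\,x$ in the inner Lebesgue integral, which is precisely the $1/\alpha$--self--similarity of the L\'evy stable motion and which converts a threshold rescaling into the matching amplitude weight.

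I expect this change--of--variables bookkeeping to be the main obstacle: one must check that the exponent carried by $r(u)$ in $f_\ell$ is exactly the one making $\int_\bbR|G(\cdot,j)|^\alpha\,dx$ equal $\int_\bbR|H(\cdot,j)|^\alpha\,dx$ for every slice $j$, and that the indicator structure $1_{A_{\ell_k}}$ lines up on both sides so that off--support slices $j\notin A_{\ell_k}$ contribute nothing. Once the per--slice scale (and, if needed, skewness) functionals coincide, summing over $j\in E$ yields $\|G\|_\alpha = \|H\|_\alpha$, so the two characteristic functions agree for every linear combination, and the asserted equality of finite--dimensional distributions follows.
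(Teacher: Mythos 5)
Your strategy is the right one, and it is exactly the stable analog of the paper's proof of Proposition \ref{p:f-fBm}: the paper gives no separate proof of Proposition \ref{p:f-Lsm} (it is introduced only with ``As for f--fBm, we have''), and the intended argument is precisely yours --- check equality of finite--dimensional distributions by writing each linear combination as a single integral against a stable random measure on $\bbR\times E$ and comparing scale (and skewness) functionals. Your reduction to a slice--by--slice comparison of $\int_\bbR|G(\cdot,j)|^\alpha dx$ and $\int_\bbR|H(\cdot,j)|^\alpha dx$ is also the correct reduction.

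The gap is that you never carry out that comparison --- you only voice the expectation that the exponent $1/\alpha$ ``is exactly the one making'' the slices agree --- and this is precisely the step that fails. Fix a slice $j$ and set $\psi_j(x) := \sum_{k:\, j\in A_{\ell_k}} c_k \bigl(1_{(-\infty,t_k]}(x)-1_{(-\infty,0)}(x)\bigr)$. On the matrix side the weight enters as an amplitude, so $\int_\bbR |G(x,j)|^\alpha dx = r(j)^\alpha \int_\bbR|\psi_j|^\alpha dx$. On the f--Lsm side the substitution $x = r(j)^{1/\alpha} z$ contributes only its Jacobian: $\int_\bbR |H(x,j)|^\alpha dx = r(j)^{1/\alpha}\int_\bbR |\psi_j|^\alpha dz$. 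Because the kernels are indicators, a rescaling of the threshold by $c$ changes the $\alpha$--scale by the factor $c$ (an $L^1$--type dependence), whereas an amplitude factor $c$ changes it by $c^\alpha$; the two slice contributions therefore agree for all choices of $c_k,t_k,\ell_k$ only if $r(j)^\alpha = r(j)^{1/\alpha}$, i.e.\ $r(j)=1$, since $\alpha\neq 1$. So with the paper's conventions ($\vec\Lambda_\alpha(t) = (r(j)\Lambda_\alpha^{(j)}(t))_{1\le j\le {\cal J}}$ in Theorem \ref{t:slow-regime}), the identity as printed holds only when $r\equiv 1$; for general $r$ the statement needs $f_\ell(u) = r(u)^{\alpha} 1_{A_\ell}(u)$ (equivalently, weights $r(j)^{1/\alpha}$ in $\vec\Lambda_\alpha$ together with $f_\ell(u) = r(u)1_{A_\ell}(u)$). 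The Gaussian case avoids this trap because there an amplitude $r$ contributes $r^2$ to the covariance while the $\|\cdot\|_{2H}^{2H}$ norm converts the threshold $r^{1/H}$ into $(r^{1/H})^{2H} = r^2$; the correct stable analog of the exponent $1/H$ is thus $\alpha$, not $1/\alpha$. Completing your own change--of--variables bookkeeping would have exposed this; as written, the proposal stops exactly where the verification breaks down.
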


The properties of the f--Lsm parallel those of f--fBm. For example, the process $\{\Lambda(tf)\}_{t \ge 0}$
is a L\'evy stable motion.

\begin{proposition}\label{p:properties-fLsm} Let $\alpha \in (1,2)$ and $\{\Lambda(f)\}_{f\in L^1(\mu)}$ be a functional L\'evy $\alpha$-stable
motion.  We then have:

{\it (i)} The process $\Lambda$ is $1/\alpha$ self--similar:
$$
\{\Lambda(cf)\}_{f\in L^1(\mu)} \stackrel{d}{=}\{c^{1/\alpha} \Lambda (f)\}_{f\in L^1(\mu)},\ \ (\forall c>0).
$$

{\it (ii)} $\Lambda$ has stationary increments:
$$
\{\Lambda(f+h)-\Lambda(h)\}_{f\in L^1(\mu)} \stackrel{d}{=}\{ \Lambda(f)\}_{f\in L^1(\mu)},\ \ (\forall h\in L^{1}(\mu)).
$$ 

{\it (iii)} $\Lambda(f)$ and $\Lambda(g)$ are independent {\em if and only if} $fg \le 0$ $\mu-$a.e.

{\it (iv)} For all $0\le f_1 \le f_2 \le \cdots \le f_n$ (mod $\mu$) and $n\in\bbN$, the increments
$$
 \Lambda(f_1), \Lambda(f_2) -\Lambda(f_1),\cdots, \Lambda(f_n) -\Lambda(f_{n-1}),
$$
are {\em independent}.

{\it (v)} $\Lambda(f+g) = \Lambda(f) + \Lambda(g)$, {\em if and only if} $fg = 0$ $\mu-$a.e.

{\it (vi)} $\{\Lambda(tf)\}_{t\ge 0}$ is an ordinary L\'evy $\alpha$--stable motion.
\end{proposition}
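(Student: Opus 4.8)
The plan is to reduce every assertion to standard properties of the underlying $\alpha$--stable integral. Writing $g_f(x,u) := 1_{(-\infty,f(u)]}(x) - 1_{(-\infty,0)}(x)$, we have $\Lambda(f) = \int_{\bbR\times E} g_f\,dM_\alpha$, and the linear map $f\mapsto g_f$ carries all the structure. The first step is a geometric observation: for fixed $u$, $g_f(\cdot,u)$ is the signed indicator of the interval between $0$ and $f(u)$, namely $g_f(x,u) = 1_{(0,f(u)]}(x)$ when $f(u)\ge 0$ and $g_f(x,u) = -1_{(f(u),0)}(x)$ when $f(u)<0$. In particular $\int_\bbR |g_f(x,u)|^\alpha\,dx = |f(u)|$, so $\|g_f\|_\alpha^\alpha = \|f\|_1$ and the integral is well defined for $f\in L^1(\mu)$. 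I would then record the two facts about stable integrals from \cite{samorodnitsky:taqqu:1994book} that do all the work: (a) two integrals $\int g_1\,dM_\alpha$ and $\int g_2\,dM_\alpha$ are independent if and only if $g_1 g_2 = 0$ a.e.\ with respect to the control measure $dx\times\mu$, and more generally integrals over pairwise disjoint supports are jointly independent; and (b) the integral is injective in the almost--sure sense, namely $\int h\,dM_\alpha = 0$ a.s.\ if and only if $h = 0$ a.e.

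For (i) and (ii) I would exploit the invariance of the control variable $x$ under affine maps. For self--similarity, the substitution $x\mapsto cx$ ($c>0$) gives $g_{cf}(cx,u) = g_f(x,u)$, so under the change of variables the joint characteristic functional of $(\Lambda(cf_k))_k$ picks up exactly one factor of $c$ in the control--measure integrals that define it; this is the homogeneity matching $c^{1/\alpha}(\Lambda(f_k))_k$, and since stable laws are determined by this functional, (i) follows. For stationary increments I would use the $u$--dependent translation $\psi:(x,u)\mapsto (x+h(u),u)$, which preserves $dx\times\mu$. A direct computation shows $g_{f+h}(x,u)-g_h(x,u)$ is the signed indicator of the interval between $h(u)$ and $h(u)+f(u)$, and $\psi$ carries this family (indexed by $f$) onto $\{g_f\}_f$; since integrating against $M_\alpha$ is invariant in law under a control--measure--preserving change of variables, the increment family has the same finite--dimensional distributions as $\{\Lambda(f)\}_f$, giving (ii).

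The remaining properties follow from the geometry observation together with (a) and (b). For (iii), $g_f$ and $g_g$ have overlapping support on a set of positive measure exactly when $f(u)$ and $g(u)$ are nonzero of the same sign; hence $g_f g_g = 0$ a.e.\ if and only if $f(u)g(u)\le 0$ for $\mu$--a.e.\ $u$, and (a) converts this into the independence statement. For (iv), when $0\le f_1\le\cdots\le f_n$ the increment integrands $g_{f_{k+1}}-g_{f_k}$ are supported on the disjoint layers $\{f_k(u)<x\le f_{k+1}(u)\}$, so joint independence is immediate from (a). For (v), linearity gives $\Lambda(f)+\Lambda(g)-\Lambda(f+g) = \int (g_f+g_g-g_{f+g})\,dM_\alpha$, so by (b) the identity holds a.s.\ if and only if $g_f+g_g = g_{f+g}$ a.e.; examining each sign configuration of $(f(u),g(u))$ shows the two integrands agree precisely when $f(u)g(u)=0$. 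Finally (vi) is assembled from the pieces: $\Lambda(0)=0$ since $g_0\equiv 0$; for $0\le t_1<\cdots<t_n$ the increments $\Lambda(t_{k+1}f)-\Lambda(t_k f)$ have disjoint supports (the intervals between $t_k f(u)$ and $t_{k+1}f(u)$), hence are independent by (a); stationarity of the increments in $t$ is the special case $h=t_k f$ of (ii); and self--similarity is (i). Since each $\Lambda(tf)$ is an $\alpha$--stable integral its marginals are $\alpha$--stable, so $\{\Lambda(tf)\}_{t\ge0}$ has stationary independent $\alpha$--stable increments and is a L\'evy $\alpha$--stable motion.

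I expect the main obstacle to be the necessity directions that rely on injectivity and on the full independence characterization: in (v) one must argue that additivity of the integrands is not merely sufficient but necessary, which is exactly where fact (b) is essential, and in (iii) the "only if" direction needs the converse half of the independence criterion for stable integrals rather than just the disjoint--support implication. The accompanying sign--by--sign bookkeeping in the geometry observation, while elementary, is where all the $fg\le 0$ versus $fg=0$ dichotomies are actually decided, so it must be carried out carefully.
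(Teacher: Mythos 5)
Your proposal is correct and takes essentially the same route as the paper's proof: a change of variables in the control measure for {\it (i)}--{\it (ii)}, the Samorodnitsky--Taqqu independence criterion for stable integrals (their Theorem 3.5.3, your fact (a)) combined with sign-by-sign analysis of the integrands for {\it (iii)}--{\it (v)}, and disjointness of the increment integrands for {\it (iv)} and {\it (vi)}. The only cosmetic differences are that the paper phrases {\it (i)}--{\it (ii)} via scale coefficients of linear combinations $\|\sum_j \theta_j \Lambda(f_j)\|_\alpha$ rather than the joint characteristic functional, and proves {\it (vi)} by splitting $f = f_+ - f_-$ and invoking {\it (iv)} on each part, where you instead verify disjointness of the signed increment integrands directly.
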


\medskip
\noi The proof is given in the Appendix.

\medskip
\noi{\em Remark:} Here, for simplicity, we focus only on the case $\alpha\in (1,2)$, where the mean of $M_\alpha$ is finite and set to zero.  Implicitly,
the skewness coefficient function is assumed to be constant.  The functional L\'evy stable motion can be defined for all $\alpha\in (0,2)$, provided that the
random measure $M_\alpha$ is {\em strictly stable} with constant skewness intensity function.  For example, the symmetric $\alpha-$stable case is particularly
simple.  For more details of $\alpha-$stable stochastic integration, see eg \cite{samorodnitsky:taqqu:1994book}.

\medskip
\noi{\bf Integral representation of f--fBm:} The explicit representation of the f--Lsm processes suggests
that the f--fBm may be also conveniently handled through stochastic integrals.  Indeed:

\medskip
\begin{proposition}\label{p:f-fBm-int-rep} For all $H\in (0,1),$ we have that:
$$
 B(f) := \int_{\bbR\times E} {\Big(}(f(u)-x)_+^{H-1/2} - (-x)_+^{H-1/2}{\Big)} W(dx,du),
$$
is a functional fBm, where $W(dx,du)$ is a Gaussian random measure with control measure $dx\times \mu(du)$ and
$(x)_+:=\max\{x,0\}$.
\end{proposition}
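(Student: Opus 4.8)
The plan is to verify directly that $B(f)$ is a well-defined, zero-mean Gaussian process whose covariance equals the functional $\phi_{2H}$ of \eqref{e:phi}. Since $W(dx,du)$ is a Gaussian random measure with control measure $dx\times\mu(du)$, every stochastic integral $\int h\,dW$ of a deterministic kernel $h\in L^2(dx\times\mu)$ is a zero-mean Gaussian random variable, and any finite linear combination $\sum_i c_i\int h_i\,dW = \int(\sum_i c_i h_i)\,dW$ is again Gaussian; hence the collection $\{B(f)\}$ is jointly Gaussian with mean zero, and for two kernels one has the isometry
$$
\E\Big(\int h_1\,dW\Big)\Big(\int h_2\,dW\Big) = \int_{\bbR\times E} h_1 h_2\, dx\,\mu(du).
$$
Thus everything reduces to (a) showing that the integrand $g_f(x,u) := (f(u)-x)_+^{H-1/2} - (-x)_+^{H-1/2}$ lies in $L^2(dx\times\mu)$ for $f\in L^{2H}(\mu)$, and (b) computing the covariance.

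First I would isolate the one-dimensional Mandelbrot--Van Ness kernel. Set
$$
C := \int_{\bbR}\big[(1-x)_+^{H-1/2} - (-x)_+^{H-1/2}\big]^2\,dx.
$$
This integral is finite precisely when $H\in(0,1)$: the exponent $2H-1>-1$ makes the singularities at $x=0$ and $x=1$ integrable, while at $x\to-\infty$ the difference decays like $|x|^{H-3/2}$, giving a squared tail $|x|^{2H-3}$ that is integrable exactly when $H<1$. A scaling substitution $x\mapsto|a|x$, together with a reflection to treat $a<0$, then yields
$$
\int_{\bbR}\big[(a-x)_+^{H-1/2} - (-x)_+^{H-1/2}\big]^2\,dx = C|a|^{2H},\qquad a\in\bbR.
$$
Taking $a=f(u)$ and integrating in $u$ (by Tonelli, the integrand being nonnegative) gives $\|g_f\|_{L^2}^2 = C\int_E|f(u)|^{2H}\mu(du) = C\|f\|_{2H}^{2H}<\infty$, which proves (a).

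For the covariance I would avoid a direct cross-term evaluation and use polarization together with translation invariance. Writing $\psi_a(x):=(a-x)_+^{H-1/2}-(-x)_+^{H-1/2}$, the shift $y=x-b$ cancels the $(-x)_+$ terms and gives
$$
\int_{\bbR}\big(\psi_a-\psi_b\big)^2\,dx = \int_{\bbR}\psi_{a-b}(y)^2\,dy = C|a-b|^{2H}.
$$
Combined with $\int\psi_a^2=C|a|^{2H}$ and $\int\psi_b^2=C|b|^{2H}$ from the previous step, polarization yields
$$
\int_{\bbR}\psi_a(x)\psi_b(x)\,dx = \frac{C}{2}\big(|a|^{2H}+|b|^{2H}-|a-b|^{2H}\big).
$$
Applying the isometry with $a=f(u)$, $b=g(u)$ and integrating in $u$ (Fubini is justified by $|g_fg_g|\le\frac12(g_f^2+g_g^2)$ and the $L^2$ bounds of the previous step), I obtain
$$
\E B(f)B(g) = \frac{C}{2}\big(\|f\|_{2H}^{2H}+\|g\|_{2H}^{2H}-\|f-g\|_{2H}^{2H}\big) = \phi_{2H}(f,g),
$$
with $\sigma^2=C$; any other $\sigma$ is obtained by rescaling $W$. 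By the definition of f--fBm this proves the claim.

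The main obstacle is bookkeeping rather than conceptual: verifying the finiteness of $C$ exactly on $(0,1)$, handling the sign and reflection in the scaling identity for $a<0$, and justifying the two applications of Fubini/Tonelli. The polarization device is what keeps the covariance step painless, since it routes the whole computation through the single constant $C$ and never requires evaluating the cross term $\int\psi_a\psi_b$ directly.
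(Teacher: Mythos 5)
Your proposal is correct and follows essentially the same route as the paper's proof: the Gaussian isometry, cancellation of the $(-x)_+$ terms under differences (translation invariance), a scaling substitution producing the $|a|^{2H}$ law, and polarization to recover $\phi_{2H}(f,g)$ — the paper states this polarization reduction up front by noting it suffices to compute ${\rm Var}(B(f)-B(g))$. The only differences are presentational: you work at the level of the one-dimensional kernel $\psi_a$ before integrating in $u$, and you add the (welcome, but omitted in the paper) verification that the normalizing constant $C$ is finite precisely for $H\in(0,1)$, together with explicit Fubini/Tonelli justifications.
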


\medskip
\noi The proof is given in the Appendix.

The last representation provides further tools as well as intuition into the nature of the f--fBm.  Indeed,
suppose that $E$ is discrete.  Then, $W(dx, \{u_1\})$ and $W(dx, \{u_2\})$ are independent Gaussian measures 
on $\bbR$, for $u_1 \not = u_2$. Thus, the stochastic integral over $E$ becomes a sum of independent processes,
each of which has the form of a fractional Brownian motion.  That is,
$$
 B(f) = \sum_{u\in E} B_H^{(u)} (f(u)),
$$
where $\{B_H^{(u)}(t)\}_{t \in \bbR}$ are i.i.d.\ fBm's indexed by $u\in E$.

Thus, the functional fBm may be viewed as a suitable, infinitesimal sum of independent fBm's each 
indexed by the corresponding values $f(u)$ of the functional argument $f$.  This is essentially why
the f--fBm provides a succinct representation of the limit process in Theorem \ref{t:fast-regime}.

In the next section, we utilize the simple parametric form of the limit approximations to solve the
{\em network kriging} problem.
 
\section{An Application to Network Kriging} \label{s:kriging}

In view of Theorems \ref{t:fast-regime} and \ref{t:slow-regime} one can model the joint distribution
of the traffic traces $Y_\ell(t),\ 1\le \ell \le L$ as increments of functional fBm or functional 
L\'evy stable motion.  Here, we focus on the {\em fast regime}, where according to Theorem 
\ref{t:fast-regime}, the traffic traces are approximated by Gaussian processes.  

Consider the traffic time series 
$$
 Y_{\delta}(\ell,k):=\int_{(k-1)\delta}^{k\delta} Y_\ell(t) dt,\ k=1,2,\cdots
$$
of the number of bytes traversing link $\ell$ during the $k$--th time interval $((k-1)\delta,k\delta)$, 
for a fixed time scale $\delta>0$. Guided by the multiple sources asymptotics, let $B(f),\ f\in L^{2H}(E,\mu)$ be
an f--fBm, where $E=\{1,\cdots,{\cal J}\}$ and $\mu$ is the counting measure on $E$.  Set,
$$
  Y_\delta(\ell, k) := \mu_Y(\ell) + B(k f_\ell) - B((k-1)f_\ell),\ \ \ k =1,2,\cdots,
$$ 
where $f_\ell(u) = r(u)^{1/H} 1_{A_{\ell}}(u),\ u\in E\equiv \{1,\cdots,{\cal J}\}$ and $A_\ell$ is the set of 
all routes using link $\ell$.  Here $\mu_Y(\ell) = \E Y_\delta (\ell,k)$ is the traffic mean over link $\ell$.

Assuming that the mean structure $\vec \mu_Y = (\mu_Y(\ell))_{1\le \ell \le L}$ and the parameters $H$ and
$r(u)$ of the limit f--fBm model are {\em known}, then one recovers the joint distribution of the traffic 
load on the network across {\em all links} $\ell$ and {\em time slots} $k$.  This allows one to address a number of
fundamental statistical problems.  

\medskip
\noi{\bf Instantaneous prediction (network kriging):}  {\em Observed are the traffic loads 
\begin{equation}\label{e:D}
 {\cal D}:= \{ Y(\ell,t),\ 1\le t\le t_0,\ \ell \in {\cal O} \},
\end{equation}
over the set of links $\ell \in {\cal O} \subset \{1,\cdots,L\}$ at time slots $t,\ 1\le t\le t_0$.
Predict the traffic load $\what Y(\ell_0,t_0)$ on a {\em unobserved link} $\ell_0$, in 
terms of the data ${\cal D}$.}
 
\smallskip
\noi{\bf Spatio--temporal prediction:}  {\em Given the data ${\cal D}$ in \eqref{e:D}, predict the traffic load 
$\what Y(\ell,t_0+h)$ on a {\em observed} or {\em unobserved link} $\ell$, at some future time $t_0+h>t$.
}

\smallskip
\noi{\em Remarks:} \begin{enumerate}

\item The estimation of the Hurst parameter $H$ is a well--studied problem (see e.g.\ 
\cite{taqqu:teverovsky:1996S,abry:veitch:1998,bardet:lang:oppenheim:philippe:stoev:taqqu:2003-livre,
 stoev:taqqu:park:marron:2005,stoev:taqqu:park:michailidis:marron:2006}.) 
We advocate the use of robustified {\em wavelet} methods to obtain $H$ in practice (see 
e.g. \cite{abry:veitch:1998,stoev:pipiras:taqqu:2002,stoev:taqqu:2003w,stoev:taqqu:2005A,stoev:taqqu:park:marron:2005}.)

On the other hand, the estimation of the mean structure $\vec \mu_Y = (\mu_Y(\ell))_{1\le \ell \le L}$, and the 
underlying parameter $r(u), 1\le u \le {\cal J}$ in the covariance structure are important and challenging problems
in practice.  We address these problems in a general statistical framework with the help of latent models and
auxiliary NetFlow data sets in the forthcoming work \cite{vaughan:stoev:michailidis:2009}.

 \item In the interest of space, we focus only on the first, instantaneous prediction problem.  The $h$--step 
prediction problem can be addressed similarly (see e.g.\ \cite{vaughan:stoev:michailidis:2009}.)  

We refer to the instantaneous prediction as {\em network kriging} because of its resemblance to geostatistical
prediction problems.  The term network kriging was introduced first, to best of our knowledge, by Chua, 
Kolaczyk and Crovella in \cite{chua:kolaczyk:crovella:2006} in the context of predicting eg delays along routes
from active network measurements of flows in the network.  Here, our setting is different since the focus is
{\em link} rather than {\em flow} measurements.
 
\end{enumerate}

For simplicity, let $\delta =1$, time $t\in \bbN$ be discrete, and (with some abuse of notation)
$$
 \vec Y(t):= (Y_\delta(\ell,t))_{1\le \ell \le L},\ \ t = 1,2,\cdots. 
$$
Partition the vector $\vec Y(t)$ and the rows of the routing matrix $A$ into two components, 
corresponding to the indices of the {\it unobserved} ('u') and {\it observed} ('o') sets of links:
$$
\vec Y(t) = \left( \begin{array}{l}
                   Y_u(t)\\ Y_o(t)\end{array}\right)\ \ \  \mbox{ and }\ \ \
A =  \left( \begin{array}{l}
                   A_u \\ A_o\end{array}\right).
$$

\begin{proposition} \label{p:std-Kriging} Let $\vec Y(t) = A \vec X(t)$, where $\E \vec X(0) = \mu_X$,
and $\Sigma_{X} := \E (\vec X(0) - \mu_X) (\vec X(0) - \mu_X)^t$.  Suppose that the matrix 
$A_o\Sigma_XA_o^t$ is invertible.  Then:

{\it (i)} The statistic
\begin{equation}\label{e:p:std-Kriging-Y-hat}
 \what Y_u(t_0) = A_u\mu_X+ A_u\Sigma_XA_o^t (A_o\Sigma_XA_o^t)^{-1}(Y_o(t_0)-A_o\mu_X)
\end{equation}
is a unbiased predictor for $Y_u(t_0)$ in terms of the data ${\cal D}$ in \eqref{e:D}. The 
mean--squared error (m.s.e.) matrix of $\what Y_u(t_0)$ is:
\begin{eqnarray} \label{e:p:std-Kriging-MSE}
& & \MSE(\what Y_u(t_0) | {\cal D}) \\ 
& & \ \ := \E {\Big(} (\what Y_u(t_0)-Y_u(t_0))(\what Y_u(t_0)-Y_u(t_0))^t | {\cal D} {\Big)} \nonumber\\
  & & \ \ = A_u\Sigma_XA_u^t-A_u\Sigma_XA_o^t(A_o\Sigma_XA_o^t)^{-1}A_o\Sigma_XA_u^t, \nonumber
\end{eqnarray}
where the last expectation is conditional, given the data ${\cal D}$.

{\it (ii)} The statistic $\what Y_u(t_0)$ in \eqref{e:p:std-Kriging-Y-hat} is the unique best unbiased
m.s.e.\ predictor of $Y_u(t_0)$ in terms of the data ${\cal D}$ in \eqref{e:D}.  That is,
for any other unbiased predictor $Y_u^*(t_0)$, we have that
\begin{equation}\label{e:p:std-Kriging-MSE-1}
 \MSE(\what Y_u(t_0) | {\cal D}) \le \MSE(Y_u^*(t_0) | {\cal D})
\end{equation}
where the last inequality means that the difference between the matrices in the right-- and
the left--hand sides is positive semidefinite.
\end{proposition}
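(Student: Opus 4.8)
The plan is to treat this as an $L^2$ projection — the orthogonality principle underlying kriging — working directly with the prediction error rather than invoking an abstract conditional--expectation theorem. First I would substitute $Y_o(t_0) = A_o\vec X(t_0)$ and $Y_u(t_0) = A_u\vec X(t_0)$ into the definition of $\what Y_u(t_0)$ and collect terms. Writing the gain matrix $K := A_u\Sigma_X A_o^t (A_o\Sigma_X A_o^t)^{-1}$, a short calculation puts the error in the clean form
$$
 \what Y_u(t_0) - Y_u(t_0) = (K A_o - A_u)(\vec X(t_0) - \mu_X),
$$
from which $\E(\what Y_u(t_0) - Y_u(t_0)) = 0$ is immediate, establishing unbiasedness.

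The crux of part {\it (i)} is then the ``normal equation'' $K A_o\Sigma_X A_o^t = A_u\Sigma_X A_o^t$, which holds by the very definition of $K$ (this is where invertibility of $A_o\Sigma_X A_o^t$ enters). It yields the orthogonality identity
$$
 \cov\big(\what Y_u(t_0) - Y_u(t_0),\ Y_o(t_0)\big) = (K A_o - A_u)\Sigma_X A_o^t = 0,
$$
i.e.\ the error is uncorrelated with the data. Plugging the error expression into $\E[(\,\cdot\,)(\,\cdot\,)^t]$ and simplifying the four resulting terms with this identity collapses them to the Schur complement $A_u\Sigma_X A_u^t - A_u\Sigma_X A_o^t(A_o\Sigma_X A_o^t)^{-1}A_o\Sigma_X A_u^t$, which is precisely the claimed MSE. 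In the Gaussian fast regime of Theorem \ref{t:fast-regime} the error is moreover \emph{independent} of $Y_o(t_0)$, so this unconditional covariance coincides with the conditional MSE matrix given ${\cal D}$.

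For the optimality in part {\it (ii)} I would use the decomposition, valid for any competing unbiased predictor $Y_u^*(t_0)$,
$$
 Y_u^*(t_0) - Y_u(t_0) = \big(Y_u^*(t_0) - \what Y_u(t_0)\big) + \big(\what Y_u(t_0) - Y_u(t_0)\big).
$$
Both predictors are functions of the data ${\cal D}$, hence so is the first bracket $D := Y_u^*(t_0) - \what Y_u(t_0)$. Conditioning the second--moment matrix on ${\cal D}$, the term $\E[D D^t \mid {\cal D}] = D D^t$ is positive semidefinite, while the cross term $\E[D(\what Y_u - Y_u)^t \mid {\cal D}] = D\,\E[(\what Y_u - Y_u)^t \mid {\cal D}]$ vanishes because the error is independent of (any function of) $Y_o(t_0)$ and has mean zero, leaving
$$
 \MSE(Y_u^*(t_0)\mid {\cal D}) = D D^t + \MSE(\what Y_u(t_0)\mid {\cal D}).
$$
The matrix inequality follows, and equality forces $D = 0$ almost surely, giving uniqueness. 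The one delicate point — and the main obstacle — is exactly the vanishing of this cross term: for an arbitrary (possibly nonlinear) unbiased predictor it requires upgrading ``uncorrelated'' to ``independent,'' which is where Gaussianity of $\vec X(t_0)$ in the fast regime is essential. If one restricts to \emph{linear} unbiased predictors, then $D$ is affine in $Y_o(t_0)$ and the uncorrelatedness established above already suffices, so no distributional assumption beyond $\Sigma_X$ is needed.
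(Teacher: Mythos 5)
Your equal--time algebra is correct and, as far as it goes, is a clean elementary alternative to the paper's own route (the paper simply cites the Gaussian conditional--distribution theorem and reads off \eqref{e:p:std-Kriging-Y-hat}--\eqref{e:p:std-Kriging-MSE}): with your gain matrix $K=A_u\Sigma_XA_o^t(A_o\Sigma_XA_o^t)^{-1}$, the error form $(KA_o-A_u)(\vec X(t_0)-\mu_X)$, the normal equation, the orthogonality to $Y_o(t_0)$, and the Schur--complement MSE all check out. The genuine gap is that you treat the conditioning set as if it were $\{Y_o(t_0)\}$, whereas ${\cal D}$ in \eqref{e:D} also contains the past observations $Y_o(t)$, $1\le t<t_0$. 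The model underlying Section \ref{s:kriging} is long--range dependent in time, so $Y_u(t_0)$, and hence your error vector, is genuinely correlated with the past; independence of the error from $Y_o(t_0)$ does not by itself give independence from all of ${\cal D}$. Concretely, both the identification $\MSE(\what Y_u(t_0)\mid{\cal D})=\E\big[(\what Y_u(t_0)-Y_u(t_0))(\what Y_u(t_0)-Y_u(t_0))^t\big]$ in part {\it (i)} and the vanishing of your cross term $D\,\E\big[(\what Y_u(t_0)-Y_u(t_0))^t\mid{\cal D}\big]$ in part {\it (ii)} --- where $D$ is a function of \emph{all} of ${\cal D}$, not merely of $Y_o(t_0)$ --- require $\E\big[\what Y_u(t_0)-Y_u(t_0)\mid{\cal D}\big]=0$, which you never establish. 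The same objection applies to your closing remark on linear predictors: a competing linear predictor may load on the past observations, so equal--time uncorrelatedness alone does not suffice there either.

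What closes the gap is the product (separable) form of the space--time covariance of the functional fBm, Relation \eqref{e:space-time-cov}: normalizing $\gamma_X(0)=1$, one has $\cov(\vec X(t_0),\vec X(t))=\gamma_X(|t_0-t|)\,\Sigma_X$, so the \emph{same} normal equation kills all lagged correlations,
$$
(KA_o-A_u)\,\cov(\vec X(t_0),\vec X(t))\,A_o^t \;=\; \gamma_X(|t_0-t|)\,(KA_o-A_u)\Sigma_X A_o^t \;=\;0,
\qquad 1\le t\le t_0,
$$
and then Gaussianity upgrades this to independence of the error from the entire data set ${\cal D}$, after which both of your conditional statements follow. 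This is exactly the content of the paper's Proposition \ref{p:space-time-factor}, which its proof of Proposition \ref{p:std-Kriging} invokes at precisely this point. Note that this is not a removable technicality: for a temporally dependent $\vec X$ whose covariance does not factor in this way, the best predictor given ${\cal D}$ genuinely involves the past, and the purely instantaneous statistic \eqref{e:p:std-Kriging-Y-hat} would not be optimal; some such separability hypothesis must therefore enter the argument.
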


\noi The proof is given in the Appendix. We now make a few important observations.

\medskip
\noi{\em Remarks:}
\begin{enumerate}

 \item If $\vec Y(t)$ is non--Gaussian, then the estimator in \eqref{e:p:std-Kriging-Y-hat} remains
the {\em best linear unbiased predictor} (b.l.u.p.) of $Y_u(t_0)$ in terms of the data ${\cal D}$.
Relations \eqref{e:p:std-Kriging-MSE} and \eqref{e:p:std-Kriging-MSE-1} continue to hold,
where now $Y_u^*(t_0)$ is an arbitrary linear in ${\cal D}$, unbiased predictor of
$Y_u(t_0)$.

 \item By Gaussianity, it is easy to see that $\what Y_u(t_0)$ in \eqref{e:p:std-Kriging-Y-hat} also 
 maximizes the conditional likelihood of $Y_u(t_0)$, given the data.

 \item Note that {\em only} the observations $Y_o(t_0)$ at the {\em present time} $t_0$ are involved in
\eqref{e:p:std-Kriging-Y-hat}.  This is due to the product form of the space--time covariance
structure of the functional fBm \eqref{e:space-time-cov} and Proposition \ref{p:space-time-factor} below.

 \item If the matrix $A_o\Sigma_XA_o^t$ is singular, then one can replace the inverse in
\eqref{e:p:std-Kriging-Y-hat} and \eqref{e:p:std-Kriging-MSE} by the Moore--Penrose generalized inverse.
This corresponds to focusing on the range of $A_0\Sigma_X A_o^t$, where the latter matrix is invertible.
The statistic $\what Y_u(t_0)$ remains the b.l.u.p. In practice, $A_0\Sigma_X A_o^t$ is singular only
when the traffic over a link is a perfect linear combination of the traffic over another set of links. 
This occurs in tree--type topologies, for example, where the internal nodes do not generate traffic.

 \item In the {\em slow regime} (Theorem \ref{t:slow-regime}) the functional Lsm infinite variance model
 for $\vec Y(t)$ should be used.  The prediction problems can then be also addressed but not with respect to
 the square loss.  One can consider minimizing $\E |\what Y_{\rm u}(t_0) - Y_{\rm u}(t_0)|^{p}$ for 
 $p<\alpha$ or, equivalently, the scale coefficient of the $\alpha$--stable variable. In this case, no closed--form 
 solutions are available but one can obtain numerical expressions for the best linear predictors.  
 Our experiments indicate that the coefficients of these linear predictors are often very close to those of 
 the least squares predictor in \eqref{e:p:std-Kriging-Y-hat}.
\end{enumerate}

\medskip
The fact that the b.l.u.p.\ $\what Y_u(t_0)$ in Proposition \ref{p:std-Kriging} does not depend
on the past data $Y_o(t),\ t< t_0$ shows that the $\what Y_u(t_0)$ is in fact the {\it standard kriging}
predictor, which is well--studied in spatial statistics (see eg \cite{cressie:1993}). We shall therefore 
refer to $\what Y_{u}(t_0)$ as to the {\it standard network kriging predictor}.

\medskip
The following result provides the general solution to $h$--step prediction problem.  We start by introducing 
some notation. Consider the Toeplitz matrix:
$$
 \Gamma_{m+1} := (\gamma_X(|i-j|))_{0\le i,j\le m}
$$
and  the vector $\vec \gamma_{m+1}(h) = (\gamma_X(h+j))_{0\le j \le m}$,
where
$$
\gamma_X(k) = \frac{\sigma^2}{2} {\Big(} |k+1|^{2H} +|k-1|^{2H} - 2|k|^{2H} {\Big)}.
$$
Since $\gamma_X(0) = \sigma^2>0$ and $\gamma_X(k) \to 0,$ as $k\to\infty$, the matrix 
$\Gamma_{m+1}$ is invertible, for all $m\in\bbN$ (see eg Proposition 5.1.1 in
\cite{brockwell:davis:1991}).

\begin{proposition}\label{p:temporal-pred}
 Assume the conditions of Proposition \ref{p:std-Kriging}. Let $\mu_i = A_i \mu_X 
=\E Y_i(t),\ i \in\{\mbox{'u','o'}\}$.

{\it (i)} The statistic 
\begin{equation}\label{e:temporal-pred-i}
 \what Y_o(t_0+h) := \mu_o + \sum_{j=0}^m c_j(h) (Y_o(t_0-j) - \mu_o), 
\end{equation}
is a unbiased predictor of $Y_o(t_0+h),\ h\ge 1$ via ${\cal D}$, where $\vec c(h) \equiv (c_j(h))_{0\le j\le m} 
= \Gamma_{m+1}^{-1} \vec \gamma_{m+1} (h)$.  The m.s.e.\ matrix of $\what Y_o(t_0+h)$ is then:
\begin{equation}\label{e:temporal-pred-i-mse}
 \MSE(\what Y_o(t_0+h)|{\cal D}) = \sigma^2(h) A_o \Sigma_X A_o^t,
\end{equation}
where
$
 \sigma^2(h) := \gamma_X(0) - \vec c(h)^t \Gamma_{m+1} \vec c(h) \equiv 
1 - \vec\gamma_{m+1}(h)^t \Gamma_{m+1}^{-1} \vec\gamma_{m+1}(h). 
$

{\it (ii)} The statistic
\begin{equation}\label{e:temporal-pred-ii}
 \what Y_u(t_0+h) := \mu_u + C (\what Y_o(t_0+h) - \mu_o)
\end{equation}
is a unbiased predictor of $Y_u(t_0+h)$ via ${\cal D}$, 
where $C = A_u\Sigma_XA_o^t(A_o\Sigma_XA_o^t)^{-1}$ and $\what Y_o(t_0+h)$ is as
in \eqref{e:temporal-pred-i}.  The m.s.e.\ matrix of $\what Y_u(t_0+h)$ is:
$$
\MSE(\what Y_u(t_0+h)|{\cal D}) = \sigma^2(h) C A_o \Sigma_X A_o^t C^t + 
   \MSE(\what Y_u(t_0)|{\cal D}),
$$
where $C$ is as in \eqref{e:temporal-pred-ii} and where $\MSE(\what Y_u(t_0)|{\cal D})$
is as in \eqref{e:p:std-Kriging-MSE}.

{\it (iii)} The statistics in {\it (i)} and {\it (ii)} yield the best m.s.e.\ predictors in the sense of 
Proposition \ref{p:std-Kriging} {\it (ii)}.  If $Y(t)$ is non--Gaussian, then these predictors are b.l.u.p.\ 
in terms of the data ${\cal D}$.
\end{proposition}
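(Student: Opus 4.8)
The plan is to exploit the \emph{product} (Kronecker) structure of the space--time covariance of the functional fBm model, which reduces the vector $h$--step problem to a scalar temporal prediction composed with the instantaneous spatial kriging of Proposition \ref{p:std-Kriging}. By \eqref{e:space-time-cov} and the factorization in Proposition \ref{p:space-time-factor}, the centered increments satisfy
$$
 \cov(Y_i(s),Y_o(t)) = \gamma_X(|s-t|)\,\Sigma_{io},\qquad i\in\{\mbox{'u','o'}\},
$$
where I abbreviate $\Sigma_{oo}:=A_o\Sigma_XA_o^t$, $\Sigma_{uo}:=A_u\Sigma_XA_o^t=\Sigma_{ou}^t$, $\Sigma_{uu}:=A_u\Sigma_XA_u^t$, and where $\gamma_X$ is standardized so that $\gamma_X(0)=1$, as in the statement. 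Stacking the data into $\vec V:=\big((Y_o(t_0-j)-\mu_o)^t\big)_{0\le j\le m}^t$, its covariance is the Kronecker product $\Gamma_{m+1}\otimes\Sigma_{oo}$, while the cross--covariance of a target $Y_i(t_0+h)$ with $\vec V$ is the block row $\vec\gamma_{m+1}(h)^t\otimes\Sigma_{io}$, since the lag between $t_0+h$ and $t_0-j$ is $h+j$.

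For {\it (i)} I would write the best linear predictor as the orthogonal projection onto the span of ${\cal D}$, namely $\what Y_o(t_0+h)-\mu_o = (\vec\gamma_{m+1}(h)^t\otimes\Sigma_{oo})(\Gamma_{m+1}\otimes\Sigma_{oo})^{-1}\vec V$. Applying the mixed--product rule $(A\otimes B)(C\otimes D)=(AC)\otimes(BD)$ together with $\Sigma_{oo}\Sigma_{oo}^{-1}=I$, the spatial factor cancels and the coefficient block collapses to $(\vec\gamma_{m+1}(h)^t\Gamma_{m+1}^{-1})\otimes I=\vec c(h)^t\otimes I$, which is exactly \eqref{e:temporal-pred-i}; unbiasedness is immediate as $\E\vec V=0$, and invertibility of $\Gamma_{m+1}$ was noted before the statement. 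The m.s.e.\ is then $\gamma_X(0)\Sigma_{oo}-\big(\vec\gamma_{m+1}(h)^t\Gamma_{m+1}^{-1}\vec\gamma_{m+1}(h)\big)\Sigma_{oo}=\sigma^2(h)\Sigma_{oo}$, giving \eqref{e:temporal-pred-i-mse}.

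For {\it (ii)} the identical computation with $\Sigma_{uo}$ replacing $\Sigma_{oo}$ in the cross--covariance yields the coefficient block $(\vec\gamma_{m+1}(h)^t\Gamma_{m+1}^{-1})\otimes(\Sigma_{uo}\Sigma_{oo}^{-1})=\vec c(h)^t\otimes C$ with $C=\Sigma_{uo}\Sigma_{oo}^{-1}$, so that $\what Y_u(t_0+h)-\mu_u=C\sum_{j=0}^m c_j(h)(Y_o(t_0-j)-\mu_o)=C(\what Y_o(t_0+h)-\mu_o)$, i.e.\ temporal prediction followed by spatial kriging, as in \eqref{e:temporal-pred-ii}. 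For the m.s.e.\ I would form $\gamma_X(0)\Sigma_{uu}-\big(\vec\gamma_{m+1}(h)^t\Gamma_{m+1}^{-1}\vec\gamma_{m+1}(h)\big)C\Sigma_{oo}C^t$ and substitute $\vec\gamma_{m+1}(h)^t\Gamma_{m+1}^{-1}\vec\gamma_{m+1}(h)=1-\sigma^2(h)$, which regroups as $\sigma^2(h)\,C\Sigma_{oo}C^t+(\Sigma_{uu}-C\Sigma_{oo}C^t)$. Recognizing $C\Sigma_{oo}C^t=\Sigma_{uo}\Sigma_{oo}^{-1}\Sigma_{ou}$ and hence $\Sigma_{uu}-C\Sigma_{oo}C^t=\MSE(\what Y_u(t_0)|{\cal D})$ from \eqref{e:p:std-Kriging-MSE} produces the stated decomposition.

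Finally, {\it (iii)} follows exactly as in Proposition \ref{p:std-Kriging}{\it (ii)}: the predictors of {\it (i)}--{\it (ii)} are, by construction, the orthogonal projections of the targets onto the linear span of ${\cal D}$, so they satisfy the normal equations $\cov\big(Y_i(t_0+h)-\what Y_i(t_0+h),\vec V\big)=0$, are therefore the b.l.u.p.'s, and \eqref{e:p:std-Kriging-MSE-1} holds; under Gaussianity orthogonality is independence, so these projections coincide with the conditional expectations $\E(Y_i(t_0+h)\mid{\cal D})$, which are the overall best m.s.e.\ predictors. The main obstacle is not any single algebraic step but verifying that the spatial factor genuinely cancels, i.e.\ that the covariance factorizes as $\gamma_X(|s-t|)\,\Sigma_{io}$ uniformly across \emph{all} link pairs and time lags; this rests entirely on the product form \eqref{e:space-time-cov}, and it is precisely this property that makes the future prediction split cleanly into an fBm--increment temporal step and the instantaneous kriging step.
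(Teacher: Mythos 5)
Your proposal is correct, and for part \emph{(i)} it coincides with the paper's own proof: both stack the lagged observations into a single vector, identify its covariance as the Kronecker product $\Gamma_{m+1}\otimes\Sigma_{oo}$ and the cross--covariance with the target as $\vec\gamma_{m+1}(h)^t\otimes\Sigma_{oo}$, and let the mixed--product property cancel the spatial factor, yielding \eqref{e:temporal-pred-i} and \eqref{e:temporal-pred-i-mse}. Where you genuinely diverge is in parts \emph{(ii)}--\emph{(iii)}: you simply rerun the same Kronecker projection with $\Sigma_{uo}$ in place of $\Sigma_{oo}$, so that the coefficient block $\vec c(h)^t\otimes C$ at once identifies \eqref{e:temporal-pred-ii} as the orthogonal projection of $Y_u(t_0+h)$ onto the span of ${\cal D}$, gives its m.s.e.\ via the substitution $\vec\gamma_{m+1}(h)^t\Gamma_{m+1}^{-1}\vec\gamma_{m+1}(h)=1-\sigma^2(h)$ together with $C\Sigma_{oo}C^t=\Sigma_{uo}\Sigma_{oo}^{-1}\Sigma_{ou}$, and settles the optimality claim \emph{(iii)} in one stroke (projection $=$ b.l.u.p., and under Gaussianity $=$ conditional expectation). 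The paper instead argues in two stages: it writes $\what Y_u(t_0+h)=\mu_u+C(Y_o(t_0+h)-\mu_o)+C(\what Y_o(t_0+h)-Y_o(t_0+h))$, invokes (via Proposition \ref{p:space-time-factor}) the fact that the spatial kriging error $Y_u(t_0+h)-CY_o(t_0+h)$ is uncorrelated with, hence independent of, all $Y_o(t)$ with $t\le t_0+h$, to obtain the additive m.s.e.\ decomposition, and then proves \emph{(iii)} by the tower property, conditioning $\E\bigl(Y_u(t_0+h)\,|\,\{Y_o(s),\ s\le t_0+h\}\bigr)$ further on ${\cal D}$. Both arguments rest on exactly the same product structure of the space--time covariance that you highlight at the end; yours is the more compact and purely computational route, while the paper's decomposition makes transparent \emph{why} the error splits into an orthogonal spatial piece plus a temporal piece, which is precisely the form in which the m.s.e.\ is stated. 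The only point to keep straight in your version is the normalization $\gamma_X(0)=1$ (so that the instantaneous covariance is exactly $A\Sigma_XA^t$), which you do note explicitly.
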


\noi The proof is given in the Appendix.

\medskip
The above results provide, in principle, complete solutions to the kriging and the $h$--step prediction
problems outlined above.  The underlying mean $\mu_Y = A \mu_X$, spatial $\Sigma_Y = A \Sigma_X A^t$
and temporal covariance structure, however, involves unknown parameters.  Moreover, their estimation from
link measurements is impossible, without network--specific regularity conditions, since the number of links is
typically much smaller than the number of routes ($L<<{\cal J}$).  
In \cite{vaughan:stoev:michailidis:2009}, we focus on designing suitable latent models
for the unknown means and covariances with the help of auxiliary NetFlow data on the route--level traffic.  
These models will involve a few parameters that can be successfully estimated from link measurements.

\section{Analysis of Internet2 Data}
 \label{s:NetFlow}

Here, we will first demonstrate the validity of our probabilistic models by using real network data.
We will then illustrate the performance of the {\em standard kriging predictor} in practice.
 
\begin{figure}
\begin{center}
\includegraphics[width=3in]{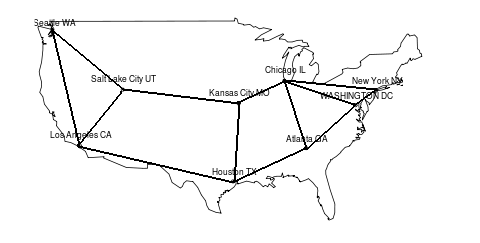}
{\caption{
{
\small The Internet2 backbone network consists of $9$ nodes and $26$ one--directional links.
All links have capacity of 10 Gbs/s, with the exception of the links: Chicago--Kansas, 
Kansas--Salt Lake City, New York--Washington, and Washington--Atlanta, all of which have
doubled capacity of 20 Gbs/s in each direction.}} \vskip-0.2in
\label{fig:I2-topology}}
\end{center}
\end{figure}

\medskip
\noi{\bf NetFlow data:} We obtained from \cite{I2}, sampled measurements of all packets traversing the Internet2 (I2)
backbone network (see Fig.\ \ref{fig:I2-topology}).  These data were used to {\em reconstruct} sampled versions 
of {\em all flows} $X_j(t),\ 1\le j \le {\cal J} = 72$ in I2.  Packet and bytes traces over the $100$
millisecond time scale were then obtained.  The routing matrix $A$ for the I2 network was deduced 
from these NetFlow data sets as well and it was found to be constant in time.

\begin{figure}
\begin{center}
\includegraphics[width=3in]{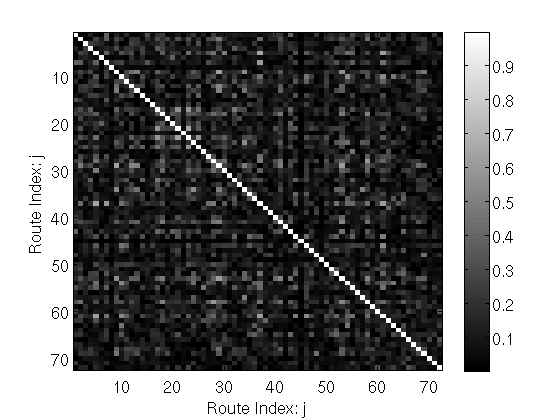}
{\caption{
{
\small Correlation matrix (absolute values) of the flow--level traffic computed from a typical hour--long traces
(in bytes per 10 sec).  The traces are deduced from NetFlow measurements on the Internet2 backbone on
Feb 19, 2009.  Brighter shades indicate numbers close to $1$. 
}} \vskip-0.2in
\label{fig:X-corr}}
\end{center}
\end{figure}

Computationally intensive processing is required to obtain the {\em flow level} data in practice.  Therefore, 
these data cannot be used directly for fast {\em on--line} prediction of traffic.  Nevertheless, we utilize this
information to validate the main assumptions in our models. Fig.\ \ref{fig:X-corr} indicates for example that
the $X_j(t)$'s are nearly uncorrelated in $j$, which supports the simplifying independence assumption. 
On the other hand, the wavelet spectrum of a typical flow indicates that $X_j(t)$ is well--modeled by a
fractional Gaussian noise time series for a wide range of time scales (see Fig.\ \ref{fig:Hurst}.)  Further, barring a few
anomalies in the NetFlow data, the Hurst exponents along most routes were found to be approximately equal (within
statistical significance).  These observations (along with NS2 simulation experiments, not discussed here due to lack space)
support the overall validity {\em global} functional fBm model for the cumulative traffic fluctuations.

%
%
\begin{figure}[h!]
\begin{center}
\includegraphics[width=3in]{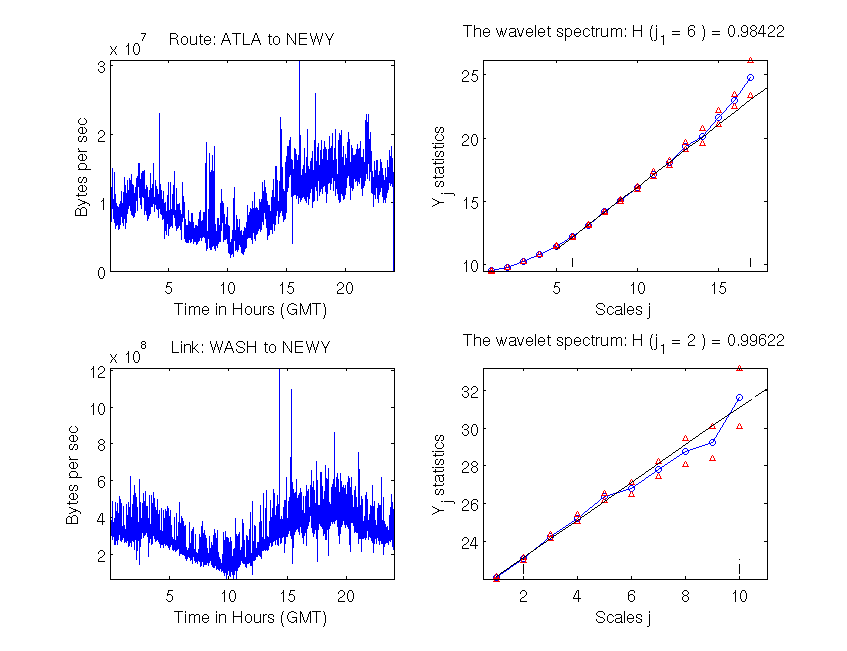}
{\caption{ \label{fig:Hurst}
{
\small {\it Top--left \& right}: a flow--level traffic trace (bytes per 100 msec)
from Atlanta to New York for March 17, 2009 {\em and} its wavelet spectrum, which yields a Hurst
parameter $\what H \approx 0.98$.
{\it Bottom--left \& right}: a link--level trace (bytes per 10 sec) for the Washington to New York link for March 17, 2009
{\em and} its wavelet spectrum with $\what H \approx 0.99$.  Observe the similarity between the diurnal patterns and
the Hurst exponents for the flow-- and link--level data. The linearity of the wavelet spectra confirms the 
fractional Gaussian noise model.}}} 
 \vskip -0.2in
\end{center}
\end{figure}

\medskip
\noi{\bf Traffic traces:} As indicated above, the NetFlow measurements cannot be used directly to readily 
predict the link loads {\em in real time}.  We acquired from \cite{I2} time--synchronized traffic traces of
packets and bytes on the $10$--second time scale, for all links in the Internet2 backbone network.  
As expected, since RTT $<< 10$ sec, the routing equation \eqref{e:Y=AX} can be safely assumed to hold
for the time scales of interest.  By using coarse--scale information obtained from the corresponding NetFlow data,
we approximated the mean $\mu_X$ and variance structure of the $X_j(t)$'s. Thus, by using $\mu_X$ and 
$\Sigma_X:= {\rm diag} (\sigma_{X_j}^2,\ 1\le j\le {\cal J}),$  we obtained from \eqref{e:p:std-Kriging-Y-hat} 
the {\em standard kriging estimator} for a number of scenarios with observed and unobserved links. 

Figs \ref{fig:pred_GOOD} and \ref{fig:pred_GOOD_zoom} demonstrate the success of our global modeling strategy in the
context of network kriging. By monitoring just a few links, with the help of the {\em standard kriging estimator}
described above, one can track relatively well the traffic load on other links.    Table \ref{tab:rmse} 
shows further that a given link can be relatively well predicted from measurements of as few as two other links.
The results also show that the choice of which set of link to monitor is an important design problem.

 \begin{table}[ht]
 \begin{center}
 \begin{tabular}{rrr}
 \hline
Number of links & Link labels & Relative m.s.e. \\
 \hline
2& 3,7 & 0.07 \\
2& 7,9 &  0.12 \\
2& 9,12 & 0.08 \\
2& 12,17 & 0.41 \\
2& 17,21 & 3.06 \\
2& 3,21 & 0.05 \\
3& 3,7,9 & 0.12 \\
4& 3,7,9,12  & 0.08 \\
5& 3,7,9,12,17  & 0.07 \\
6& 3,7,9,12,17,21 & 0.06 \\
8& 3,5,7,9,11,12,17,21  & 0.06 \\
10& 3,5,7,9,11,12,17,21,23,25 & 0.06 \\
  \hline
 \end{tabular}
 \end{center}
\caption{\label{tab:rmse}  Empirical relative mean squared error for the standard
kriging estimator:  ${\rm (r.m.s.e.)}\  = (\sum_{t=1}^{T} (\what Y_u(t) - Y_u(t))^2)/( \sum_{t=1}^{T} Y_u(t)^2)$.
The Internet2 backbone link 13 (Kansas City to Chicago) was predicted from various sets of other backbone
links. The data spans the entire day of February 19, 2009.  The link ID's are described in Table \ref{tab:I2links}.
}
 \end{table}

\begin{table}
\begin{tabular}{|lll||lll|}
\hline
\textbf{ID} & \textbf{Source--Destination} & \textbf{Cap.} &
\textbf{ID} & \textbf{Source--Destination} & \textbf{Cap.} \\
\hline
1 & Los Angeles--Seattle & 10 Gb/s & 2& Seattle--Los Angeles& 10 Gb/s \\
3 & Seattle--Salt Lake City & 10 Gb/s & 4& Salt Lake City--Seatle& 10
Gb/s \\
5 & Los Angeles--Salt Lake City & 10 Gb/s & 6& Salt Lake City--Los
Angeles& 10 Gb/s \\
7& Los Angeles--Houston  & 10 Gb/s & 8& Houston--Los Angeles& 10 Gb/s
\\
9 & Salt Lake City--Kansas City  & 10 Gb/s & 10& Kansas City--Salt
Lake City& 10 Gb/s \\
11 & Kansas City--Houston & 10 Gb/s & 12& Houston--Kansas City& 10
Gb/s \\
13 & Kansas City--Chicago & 20 Gb/s & 14 & Chicago--Kansas City & 20
Gb/s \\
15 & Houston--Atlanta & 10 Gb/s & 16 & Atlanta--Houston & 10 Gb/s \\
17 & Chicago--Atlanta & 10 Gb/s & 18 & Atlanta--Chicago & 10 Gb/s \\
19 & Chicago--New York & 10 Gb/s & 20 & New York--Chicago & 10 Gb/s \\
21 & Chicago--Washington & 10 Gb/s & 22 & Washington--Chicago & 10
Gb/s \\
23 & Atlanta--Washington & 10 Gb/s & 24 & Washington--Atlanta & 10 Gb/s
\\
25 & Washington--New York & 20 Gb/s & 26 & New York--Washington & 20
Gb/s \\
\hline
\end{tabular}
\caption{A description of the 26 links that form the Internet2
   backbone.  Also provided are the id numbers used in this paper for
   notational simplicity.}
\label{tab:I2links}
\end{table}

\begin{figure}
 \begin{center}
 \includegraphics[width=3in]{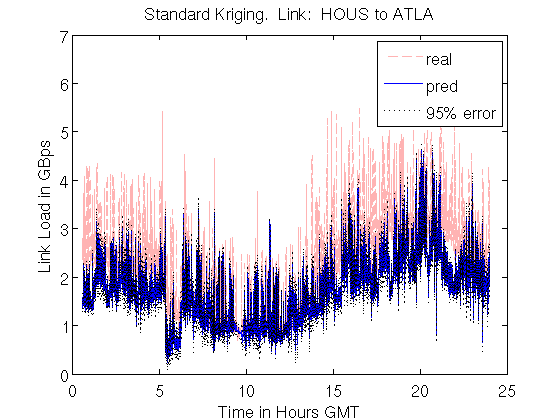}
{\caption{\label{fig:pred_GOOD}
 {\small Prediction for the Internet2 backbone link Houston to Atlanta {\tt HOUS->ATLA} based on the
links: {\tt SEAT->SALT, SEAT->LOSA, LOSA->HOUS, ATLA->WASH, CHIC->NEWY.}
The traces reflect an entire day of activity (February 19, 2009).  
Observe the diurnal patterns and the utilization (see the caption of Fig.\ \ref{fig:I2-topology}).  
The dotted lines indicate $95\%$ prediction bounds. 
}}}
 \vskip -0.2in
\end{center}
\end{figure}

\begin{figure}
\begin{center}
\includegraphics[width=2.5in]{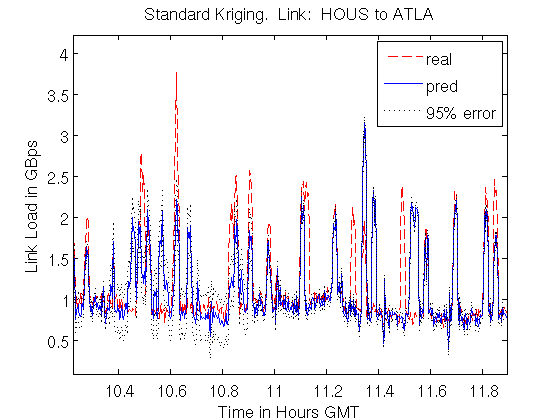}
{\caption{ \label{fig:pred_GOOD_zoom}
 {\small  A zoomed--in portion of Fig.\ \ref{fig:pred_GOOD}.  Observe that the {\em standard kriging estimator}
 closely tracks the true traffic trace and note that the prediction bounds are adequate.  
 }}}
\end{center}
\end{figure}

There are, however, objective limitations to the degree to which one can predict unobserved links from another
set of links.  The example in Fig.\ \ref{fig:pred_BAD} was chosen to illustrate these limitations.
Observe that even though the coarse--scale traffic mean is tracked somewhat, the standard kriging estimator
fails to track the finer scale behavior and the prediction intervals are rather wide.  

\begin{figure}
\begin{center}
\includegraphics[width=3in]{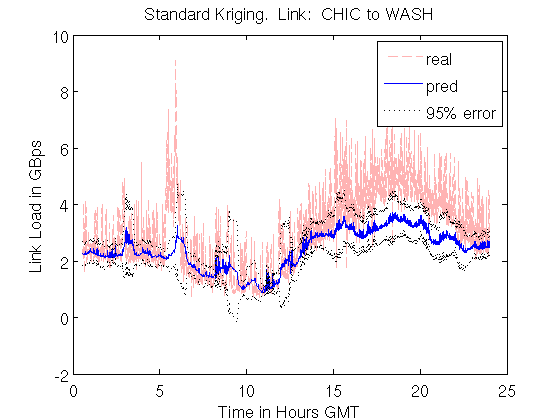}
{\caption{ \label{fig:pred_BAD}
{
\small  Prediction for the Internet2 backbone link Chicago to Washington based on the same
set of observed links as in Fig.\ \ref{fig:pred_GOOD}.  
}}}
 \vskip -0.2in
\end{center}
\end{figure}

\begin{figure}
\begin{center}
\includegraphics[width=3in]{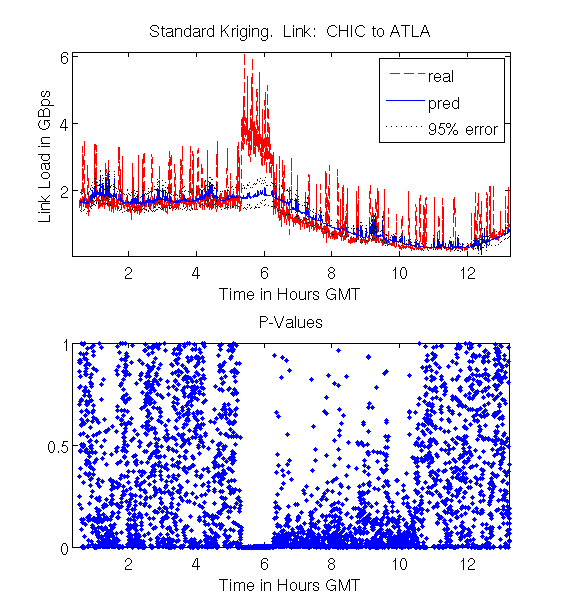}
{\caption{ \label{fig:P_vals}
{
\small  {\em Top panel:} standard kriging for {\tt CHIC->ATLA} via the same
set of observed links as in Fig.\ \ref{fig:pred_GOOD}.  {\em Bottom panel:} 
P--values corresponding to the top panel.
}}}
 \vskip -0.2in
\end{center}
\end{figure}

Fig.\ \ref{fig:P_vals} shows that network kriging may be used in 
anomaly detection as a {\em diagnostic} device.  Namely, one can predict an observed link from a set 
of other observed links and thus obtain a two--sided p--value based on the prediction distribution.  Low p--values
would indicate the presence of an anomaly.  This is well demonstrated in Fig.\ \ref{fig:P_vals} by drop in p--values
between 5am and 6am GMT.  The sudden peak load on the Chicago--Atlanta link is not tracked well by the monitored links
and the underlying NetFlow data used to recover the overall mean and variance structure of the flow--level traffic.
Thus, the network kriging methodology, based on our probabilistic model, provides an novel {\em global view}
on the statistical significance of traffic patterns in the network.

\section{Discussion}
 \label{s:discussion}

In this paper, we developed a probabilistic framework for network--wide modeling of traffic, based 
on multiple sources and large time--scale limit approximations. It is shown that depending on the scaling, 
a  {\em fast} and {\em slow} regime occur in the limit. As an extension, one can also consider
simultaneous limits as the number of sources
$M=M(T)$ and as the time scale $T$ tend to infinity, as well as other complex asymptotic scenarios.

The proposed model proves mathematically tractable, involving few statistical parameters and therefore perfectly suitable
for addressing a number of important questions for network--wide traffic behavior.  As shown, the model can successfully
predict traffic loads on unobserved links (network kriging), employing only a limited set of link measurements, 
provided that some coarse--scale information about the traffic means is available (e.g. through NetFlow data).  

The developed network kriging methodology has further applications to anomaly detection and diagnostics,  as shown in the example
of Fig.\ \ref{fig:P_vals}. Since the model captures the {\em  joint distribution} of all 
links in the network, the multiple testing problem associated with anomaly detection for a large number of links can 
be successfully handled, as well. Further, as illustrated in Fig.\ \ref{fig:pred_BAD} and Table \ref{tab:rmse}, in the presence of
limited resources, it is important to select an ``optimal set" of links for network
monitoring; this model can be used to address this problem in the context of network kriging.

Finally, estimation of the joint distribution of means and covariances of traffic flows, across time and over the network, constitutes 
a challenging, but also important problem for network engineering. Our ongoing work is addressing this problem through flexible,
parsimonious latent variable models that can be estimated in {\em real time} and {\em without} the need for  the availability of NetFlow
data \cite{vaughan:stoev:michailidis:2009}.

\section{Appendix}

\begin{proposition} \label{p:pos-def} The functional $(f,g)\mapsto \phi_{2H}(f,g)$ 
 in \eqref{e:phi} is positive semi--definite {\em if and only if} $0<H\le 1$.
\end{proposition}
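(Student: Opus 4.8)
The plan is to recognize $\phi_{2H}$ as the standard construction that turns a \emph{conditionally negative definite} (CND) kernel into a positive semidefinite (PSD) one. Writing $N(f,g) := \|f-g\|_{2H}^{2H} = \int_E |f-g|^{2H}\,d\mu$ and noting that $N(f,0) = \|f\|_{2H}^{2H}$ where $0$ denotes the zero function, we have
$$
 \phi_{2H}(f,g) = \frac{\sigma^2}{2}\bigl(N(f,0) + N(g,0) - N(f,g)\bigr).
$$
Since $\sigma^2>0$ is an irrelevant scale factor, the claim reduces to showing that $N$ is CND on $L^{2H}(\mu)$ precisely when $0<2H\le 2$, and then invoking the Schoenberg-type reduction. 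Recall that a symmetric kernel $N$ is CND if $\sum_{i,j} c_i c_j N(f_i,f_j)\le 0$ for all $f_1,\dots,f_n$ and all reals $c_i$ with $\sum_i c_i = 0$.

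First I would establish the scalar fact that $(s,t)\mapsto |s-t|^{\beta}$ is CND on $\bbR$ for $0<\beta\le 2$. For $0<\beta<2$ I would use the representation $|u|^\beta = C_\beta \int_0^\infty (1-\cos(ux))\,x^{-1-\beta}\,dx$ with $C_\beta>0$, together with the observation that, whenever $\sum_i c_i = 0$,
$$
 \sum_{i,j} c_i c_j\bigl(1-\cos((s_i-s_j)x)\bigr) = -\Bigl|\sum_i c_i e^{\mathrm{i}s_i x}\Bigr|^2 \le 0 ,
$$
so that integrating against the positive weight $x^{-1-\beta}$ preserves the sign. The endpoint $\beta=2$ is immediate from $\sum_{i,j} c_i c_j (s_i-s_j)^2 = -2\bigl(\sum_i c_i s_i\bigr)^2\le 0$. (Equivalently, this is the classical statement that fractional Brownian motion on $\bbR$ exists exactly for $H\in(0,1]$.)

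Next I would lift this to $N$ by integration: for reals $c_i$ with $\sum_i c_i=0$, Fubini gives $\sum_{i,j} c_i c_j N(f_i,f_j) = \int_E \sum_{i,j} c_i c_j |f_i(u)-f_j(u)|^{2H}\,d\mu(u)$, where the bound $|a-b|^{2H}\le 2^{2H}(|a|^{2H}+|b|^{2H})$ ensures integrability and the scalar CND property makes the integrand pointwise $\le 0$; hence $N$ is CND. Then I would run the Schoenberg reduction: augmenting the points with the reference point $f_0:=0$ and the coefficient $c_0:=-\sum_{i=1}^n c_i$ (so that $\sum_{k=0}^n c_k = 0$ and $N(0,0)=0$), a short expansion yields the identity
$$
 \sum_{i,j=1}^n c_i c_j\bigl(N(f_i,0)+N(f_j,0)-N(f_i,f_j)\bigr) = -\sum_{k,l=0}^n c_k c_l\, N(f_k,f_l) \ge 0 ,
$$
the inequality being exactly the CND property of $N$ applied to the augmented configuration. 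As the left side equals $\tfrac{2}{\sigma^2}\sum_{i,j} c_i c_j\,\phi_{2H}(f_i,f_j)$ for \emph{arbitrary} $c_1,\dots,c_n$, this proves positive semidefiniteness for $0<H\le 1$.

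For the converse I would exhibit a failure when $H>1$ by taking any set $A$ with $m:=\mu(A)\in(0,\infty)$ (e.g.\ an atom, as for the counting measure used in the paper) and the functions $f_t := t\,\mathbf 1_A$, for which $\phi_{2H}(f_t,f_s) = \tfrac{\sigma^2 m}{2}(|t|^{2H}+|s|^{2H}-|t-s|^{2H})$ is $m$ times the scalar fBm covariance. Choosing $t\in\{0,1,2\}$ with weights $\{1,-2,1\}$ (the second-order increment $B_H(2)-2B_H(1)$) yields quadratic form $\sigma^2 m\,(4-2^{2H})$, which is strictly negative for $H>1$, so $\phi_{2H}$ is not PSD. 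The main obstacle is the scalar CND step for $|u|^{2H}$; once that is in hand, the integration, the Schoenberg identity, and the counterexample are all routine algebra.
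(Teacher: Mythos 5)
Your proof is correct, but it reaches the conclusion by a genuinely different route than the paper. The paper's key lemma is that the kernel $r_\alpha(f,g)=e^{-\|f-g\|_\alpha^\alpha}$, $\alpha=2H$, is positive definite, and this is obtained from stable-process machinery: for an S$\alpha$S random measure with control measure $\mu$ one has $\E e^{i\Lambda(f-g)}=e^{-\|f-g\|_\alpha^\alpha}$, so positive definiteness follows from $\E\big|\sum_j x_j e^{i\Lambda(f_j)}\big|^2\ge 0$; the functional $\phi_{2H}$ is then recovered by augmenting with $f_0=0$ and a cleverly chosen $x_0$, expanding $e^{-\epsilon a}=1-\epsilon a+o(\epsilon)$, and letting $\epsilon\downarrow 0$. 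You instead work with conditional negative definiteness of $N(f,g)=\|f-g\|_{2H}^{2H}$ directly: the scalar kernel $|s-t|^{2H}$ is CND via the representation $|u|^{\beta}=C_\beta\int_0^\infty(1-\cos(ux))\,x^{-1-\beta}dx$ (with the $\beta=2$ endpoint handled algebraically), this lifts to $N$ pointwise by Fubini, and the passage from CND to positive semidefiniteness of $\phi_{2H}$ is an exact algebraic identity (your augmented configuration with $c_0=-\sum_i c_i$) rather than an asymptotic expansion. Both arguments ultimately encode the same harmonic-analysis fact, namely that $u\mapsto|u|^{\alpha}$ is negative definite precisely for $\alpha\le 2$ (equivalently, $e^{-|u|^{\alpha}}$ is a characteristic function), but your version buys self-containedness: no stable stochastic integration is invoked, and the exact Schoenberg identity avoids the $o(\epsilon)$ bookkeeping. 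The paper's version is economical in context, since stable random measures are already central to its framework and the same device reappears in its other proofs. For necessity, you give an explicit witness (the second difference $B_H(2)-2B_H(1)+B_H(0)$ on a set of finite positive measure has quadratic form $\sigma^2 m(4-2^{2H})<0$ for $H>1$), whereas the paper simply cites the corresponding one-dimensional fBm fact; both implicitly assume $\mu$ is nontrivial, which is automatic whenever $L^{2H}(\mu)$ contains a nonzero function, so this is harmless in either treatment.
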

\begin{proof} Since $(t,s)\mapsto \phi_{2H}(tf_0,sf_0),\ t,s\in \bbR$ has the form of the auto--covariance of fBm,
then it follows that necessarily $H \in (0,1]$ (see e.g.\ \cite{samorodnitsky:taqqu:1994book}).  It remains to show
that $\phi_\alpha$ is positive definite for all $\alpha:= 2H \in (0,2]$.  

Let $M_\alpha,\ \alpha\in(0,2]$ be an S$\alpha$S random measure with control measure
$\mu$ and define
$$
\Lambda(f) := \int_E f dM_\alpha,\ \ \forall f\in L^\alpha(\mu),
$$
to be the S$\alpha$S integral of the deterministic function $f$ (see e.g.\ Ch.\ 3 in 
\cite{samorodnitsky:taqqu:1994book}).  Notice that
for all 
$x_j \in \bbC,$ and $f_j\in L^\alpha(\mu)$, with $1\le j\le n$, we have
\begin{eqnarray*}
\E {\Big|} \sum_{j=1}^n x_j e^{i \Lambda(f_j)} {\Big|}^2 &=&
\sum_{j,k=1}^n x_j \overline{x}_k \E e^{i \Lambda(f_j-f_k)}\\
&=& \sum_{j,k=1}^n x_j \overline{x}_k e^{ -\|f_j-f_k\|_\alpha^\alpha}.
\end{eqnarray*}
Since the l.h.s.\  of the last expression is always non--negative, so is the r.h.s.
This shows that the function $r_\alpha(f,g):= e^{-\|f-g\|_\alpha^\alpha},$
$f,g\in L^\alpha(\mu)$ is positive definite.

Now, the proof proceeds as the proof of the positive definiteness of the
auto--covariance function of the fractional Brownian motion (see, e.g.\ 
p.\ 106 in \cite{samorodnitsky:taqqu:1994book}).
 Indeed, for all
 $x_j \in \bbC,$ and $f_j\in L^\alpha,\ 0\le j \le n$, and
for all $\epsilon>0$, we have
\begin{eqnarray}\label{e:pos-def-2}
0 &\le&
 \sum_{j,k=0}^n x_j \overline{x}_k e^{-\epsilon\|f_j-f_k\|_\alpha^\alpha}\\
&=&\sum_{j,k=1}^n x_j\overline{x}_ke^{-\epsilon\|f_j-f_k\|_\alpha^\alpha} 
  + \sum_{j=1}^n x_0 \overline{x}_k e^{-\epsilon\|f_0-f_k\|_\alpha^\alpha} \nonumber\\
& & \quad\quad +  \sum_{j=1}^n  x_j \overline{x}_0  e^{-\epsilon\|f_j-f_0\|_\alpha^\alpha} 
    + x_0 \overline{x}_0\nonumber\\
& =:& S_1 + S_2+ S_3 + |x_0|^2 \nonumber
\end{eqnarray}
Since $x_0$ and $f_0$ are at our disposal, let $f_0 := 0$ and
$x_0:= - \sum_{j=1}^n x_j e^{-\epsilon\|f_j\|_\alpha^\alpha}$.  Observe that
with this choice of $x_0$ and $f_0$, we get
$$
 S_2 = S_3 = - |x_0|^2 = -\sum_{j,k=1}^n x_j \overline{x}_k 
e^{-\epsilon\|f_j\|_\alpha^\alpha-\epsilon\|f_k\|_\alpha^\alpha},
$$
and therefore, $S_1 + S_2 + S_3 + |x_0|^2 $ equals:
\begin{eqnarray}\label{e:pos-def-1}
& & \sum_{j,k=1}^n x_j \overline{x}_k 
 {\Big (} e^{-\epsilon\|f_j-f_k\|_\alpha^\alpha} - 
  e^{-\epsilon\|f_j\|_\alpha^\alpha-\epsilon\|f_k\|_\alpha^\alpha} {\Big)}\\
& = & \epsilon \sum_{j,k=1}^n
  x_j \overline{x}_k {\Big(}\|f_j\|_\alpha^\alpha + \|f_k\|_\alpha^\alpha - \|f_j-f_k\|_\alpha^\alpha {\Big)}
   + o(\epsilon),\nonumber
\end{eqnarray}
as $\epsilon \downarrow 0$, where the last relation we used the fact that $e^{-\epsilon a} - e^{-\epsilon b}
 = \epsilon (b-a) + o(\epsilon)$, as $\epsilon \downarrow 0$.  If for some $x_j$'s and $f_j$'s we have
 $\sum_{j,k=1}^n
  x_j \overline{x}_k (\|f_j\|_\alpha^\alpha + \|f_k\|_\alpha^\alpha - \|f_j-f_k\|_\alpha^\alpha ) < 0$,
then, for all sufficiently small $\epsilon>0$, the l.h.s.\  of \refeq{pos-def-1} becomes negative, which
in view of \refeq{pos-def-2}, is impossible.  This shows that $\phi_\alpha$ is positive (semi--)definite.
\end{proof}

\begin{proof}[Proof of Prosition \ref{p:f-fBm}] To check the equality in distribution of two zero mean 
Gaussian processes, it is enough to show the equality of their auto--covariance functions. Let
$1\le \ell_1, \ell_2 \le L$ and $t_1, t_2\ge 0$.  Then, by using the independence of the $B_H^{(j)}(t)$'s, we have:
\begin{eqnarray}\label{e:p:f-fBm-1}
 & & \E (A \vec B_H(t_1))_{\ell_1}(A \vec B_H(t_2))_{\ell_2} \nonumber\\ 
& & \ \ \ \ \ =\sum_{1\le u \le {\cal J}}  r(u)^2 1_{A_{\ell_1} \cap A_{\ell_2}} (u) 
r_{2H}(t_1,t_2),
\end{eqnarray}
where $r_{2H}(t_1,t_2) = (\sigma^2/2)(|t_1|^{2H} + |t_2|^{2H} - |t_1-t_2|^{2H})$. 
On the other hand, as in 
\eqref{e:space-time-cov}, we obtain
\begin{eqnarray*}
& &\E B(t_1 f_{\ell_1}) B(t_2 f_{\ell_2}) = \frac{\sigma^2}{2}
 \sum_{1\le u\le {\cal J}} r(u)^2 {\Big(} |t_1|^{2H} 1_{A_{\ell_1}}(u) \\
& & \ \ \ \ \  + |t_2|^{2H} 1_{A_{\ell_2}}(u) - |t_1 1_{A_{\ell_1}}(u) - t_2 1_{A_{\ell_2}}(u)|^{2H} {\Big)},
\end{eqnarray*}
which after cancellations, equals the r.h.s.\ of \eqref{e:p:f-fBm-1}.
\end{proof}

\begin{proof}[Proof of Proposition \ref{p:properties}]
{\it (i):}  The auto--covariance function of the process $\{B(cf)\}_f$ is
$$
\phi_{2H}(cf,cg) = \frac{\sigma^2}{2}{\Big(} \|cf\|^{2H} + \|cg\|^{2H} - \|cf - cg\|^{2H} {\Big)},
$$
which equals $\E (c^HB(f)) (c^HB(g))$, where $\|\cdot\|$ stands for $\|\cdot\|_{2H}$.
This implies \eqref{e:H-ss}.  The proof of {\it (ii)} is similar.  

For simplicity, let now $\sigma =1$. Then
\begin{eqnarray*}
& & \E (B(f+h) - B(h))(B(g+h) - B(h)) \\
& & \ \  = \frac{1}{2}{\Big(}\|f+h\|^{2H} + \|g+h\|^{2H} - \|f-g\|^{2H} {\Big)} \\
& & \ \ \ \ -\frac{1}{2}{\Big(} \|h\|^{2H} + \|g+h\|^{2H} - \|g\|^{2H} {\Big)} \\
& & \ \ \ \  - \frac{1}{2}{\Big(}\|f+h\|^{2H} +\|h\|^{2H} - \|f\|^{2H} {\Big)}\\
& &\ \ \ \ \ + \|h\|^{2H} = \frac{1}{2} {\Big(}\|f\|^{2H} + \|g\|^{2H} - \|f-g\|^{2H} {\Big)},
\end{eqnarray*}
which equals $\E B(f) B(f)$, and thus implies the equality of the finite--dimensional distributions
in \refeq{stat-inc}.

{\it (iii)} Since $fg = 0$ $\mu-$a.e.\
\begin{eqnarray*}
 \|f-g\|^{2H} & = &\int_{E\cap \{f\not=0\}} |f|^{2H} d\mu + \int_{E \cap \{g\not = 0\}} |g|^{2H} d\mu \\
 & & \ \ \  = \|f\|^{2H} + \|g\|^{2H},
\end{eqnarray*}
where for simplicity, $\sigma =1$. This implies the independence of $B(f)$ and $B(g)$, since
$\E B(f) B(g) = \phi_{2H}(f,g) = 0$ in view of \eqref{e:phi}.

Part {\it (iv)} follows trivially from \eqref{e:phi}. Now, to prove {\it (v)}, it
is enough to show $\E (B(f+g) - B(f) - B(g))^2 =0$ {\em if and only if} $fg =0$ $\mu-$a.e.
It can be shown that the last expectation equals:
\begin{eqnarray*}
& & \E B(f+g)^2 + \E B(f)^2 + \E B(g)^2  - 2\E B(f+g)B(f) \\
& & \quad\quad\quad - 2 \E B(f+g)B(g) + 2 \E B(f) B(g) \\ 
 & & \ \ \  = \|f+g\|^{2H} + \|f\|^{2H} + \|g\|^{2H}  \\
 & & \ \ \ \ - (\|f+g\|^{2H} + \|f\|^{2H} - \|g\|^{2H}) \\ 
 & & \ \ \ \ - (\|f+g\|^{2H} + \|g\|^{2H} - \|f\|^{2H} ) \\
 & & \ \ \ + \|f\|^{2H} +\|g\|^{2H} - \|f-g\|^{2H}  \\
 & & \ \ \ =  2 \|f\|^{2H} + 2 \|g\|^{2H} - \|f-g\|^{2H} - \|f+g\|^{2H}.
\end{eqnarray*}
Since $0<2H<2$, the last expression vanishes if and only if $fg = 0$ $\mu-$a.e. (see Eq.\ (2.7.9) in
Lemma 2.7.14 of \cite{samorodnitsky:taqqu:1994book}).
\end{proof}

\begin{proof}[Proof of Proposition \ref{p:properties-fLsm}] Let as in \cite{samorodnitsky:taqqu:1994book},
$\|\xi\|_\alpha$ denote the scale coefficient of the $\alpha$ stable random variable $\xi$.  To prove {\it(i)},
it suffices to show that for all $f_j\in L^1(\mu)$, and $\theta_j\in\bbR,\ 1\le j \le n$, we have that
$$
\| \sum_{1\le j\le n} \theta_j \Lambda(cf_j)\|_\alpha^\alpha =  
 \| c^{1/\alpha} \sum_{1\le j\le n} \theta_j \Lambda(f_j)\|_\alpha^\alpha.
$$
The l.h.s.\ of this expression equals:
\begin{eqnarray*}
\int_{\bbR \times E} {\Big|}\sum_{1\le j\le n} \theta_j (1_{(-\infty,cf_j(u))}(x) 
  - 1_{(-\infty,0)}(x)){\Big|}^\alpha dx \mu(du).
\end{eqnarray*}
By setting $z:= x/c$, we obtain that the last integral equals:
$$
c \int_{\bbR \times E} {\Big|}\sum_{1\le j\le n} \theta_j (1_{(-\infty,f_j(u))}(z) - 
 1_{(-\infty,0)}(z)){\Big|}^\alpha dz \mu(du),
$$
which is 
$$ 
 \| c^{1/\alpha} \sum_{1\le j\le n} \theta_j \Lambda(f_j)\|_\alpha^\alpha.
$$
This completes the proof of {\it (i)}.

Part {\it (ii)} can be established similarly by using the Fubini's theorem and the change of variables
$z:=x-h(u)$ in the integral
$$
 \int_{\bbR\times E} {\Big|} \sum_{1\le j\le n} \theta_j (1_{(-\infty,f(u)+h(u))}(x) - 
1_{(-\infty,h(u))}(x)){\Big|}^\alpha dx \mu(du),
$$
which equals $\|\sum_{1\le j\le n} \theta_j (\Lambda(f+h)-\Lambda(h))\|_\alpha^\alpha$.
 
{\it (iii):} In view of Theorem 3.5.3 in \cite{samorodnitsky:taqqu:1994book}, $\Lambda(f)$ and $\Lambda(g)$ are independent
{\em if and only if} 
$$
 (1_{(-\infty,f(u))}(x) - 1_{(-\infty,0)}(x)) (1_{(-\infty,g(u))}(x) - 1_{(-\infty,0)}(x)) =0,
$$
for $dx\times\mu(du)$ almost all $(x,u)$.  By considering cases for the signs of  $f(u)$ and $g(u)$, it follows that
the latter equality holds (for $dx\times\mu(du)$ almost all $(x,u)$) if and only if $f(u)g(u)\le 0$ $\mu(du)$--a.e.

{\it (iv):} Let $f_0:= 0$ and observe that 
$$
\Lambda(f_k) -\Lambda(f_{k-1}) = \int_{\bbR\times E} 1_{A_k}(x,u) M_\alpha(dx,du), 
$$
where $A_k = \{ (x,u): f_{k-1}(u)\le x< f_{k}(u)\}$, for $1\le k\le n$.  Again, by Theorem 3.5.3
in \cite{samorodnitsky:taqqu:1994book}, we have that the above increments are independent, if and only if,
the sets $A_k,\ 1\le k\le n$ are mutually disjoint (mod $dx\times d\mu$), which is clearly the case here.

The proof of {\it (v)} is similar to that of {\it (iv)}. 

{\it (vi):} Follows {\it (ii)} and part {\it (iv)} applied to the independent processes 
$\Lambda (t f_+)$ and $\Lambda(t f_-)$, where $f_\pm = \max\{\pm f,0\}$ since 
$\Lambda(tf) = \Lambda(tf_+) - \Lambda(tf_-).$
\end{proof}

\begin{proof}[Proof of Proposition \ref{p:f-fBm-int-rep}]
To prove that $\E B(f) B(g) = \phi_{2H}(f,g)$, it suffices to show that
$$
{\rm Var}(B(f)-B(g)) = \E (B(f)-B(g))^2 = \sigma^2 \|f-g\|_{2H}^{2H}.
$$
This is indeed the case: By using changes of variables and Fubini's theorem, we have that
\begin{eqnarray*}
& & \E (B(f)-B(g))^2 \\
& & \ \ \ = \int_{\bbR\times E}  {\Big(}(f(u)-x)_+^{H-1/2} - (g(u)-x)_+^{H-1/2}{\Big)}^2dx\mu(du),
\end{eqnarray*}
equals
\begin{eqnarray*}
& &\int_{\bbR\times E} |f(u)-g(u)|^{2H-1} {\Big(}(1-x/c(u))_+^{H-1/2} \\
& & \ \ \ \ - (-x/c(u))_+^{H-1/2}{\Big)}^2dx\mu(du)\\
& & \ = \sigma^2 \int_{E} |f(u)-g(u)|^{2H} \mu(du) = \sigma^2 \| f-g\|_{2H}^{2H},
\end{eqnarray*}
where $c(u) = f(u)-g(u)$ and $\sigma^2 = \int_{\bbR} ((1-x)_+^{H-1/2} - (-x)_+^{H-1/2})^2 dx$.
\end{proof}

%

\smallskip
\begin{proof}[Proof of Proposition \ref{p:std-Kriging}]
Let $\mu_Y = \E Y(t_0) = (\mu_u^t,\ \mu_o^t)^t$ and
$$
 \Sigma_Y = \E (Y(t_0) -\mu_Y) (Y(t_0) -\mu_Y)^t = \left(\begin{array}{ll}
 \Sigma_{uu} & \Sigma_{uo}\\
 \Sigma_{ou} & \Sigma_{oo} \end{array} \right),
$$
where $\Sigma_{ij} = A_i \Sigma_X A_j^t,\ i,j\in\{o,u\}$.  
The conditional distribution of $Y_u(t_0)| Y_o(t_0)$ is Gaussian and:
\begin{equation}
 Y_u(t_0) | Y_o(t_0) \sim {\cal N} (\mu_u+\Sigma_{uo}\Sigma^{-1}_{oo}(Y_o-\mu_o),
 \Sigma_{uu}-\Sigma_{uo}\Sigma^{-1}_{oo}\Sigma_{ou})
 \label{eqn.cdist} 
\end{equation}
(see eg Theorem 1.6.6 in \cite{brockwell:davis:1991}). Thus, an unbiased predictor of $Y_u(t_0)$, given $Y_o(t_0)$ is:
\begin{equation}
 \what Y_u(t_0):= \E ( Y_u(t) | Y_o(t) )= \mu_u + \Sigma_{uo}\Sigma^{-1}_{oo}(Y_o-\mu_o).
\label{eqn.condmse}
\end{equation}
 This implies \eqref{e:p:std-Kriging-Y-hat} 
and \eqref{e:p:std-Kriging-MSE} with ${\cal D}$ replaced by $Y_o(t_0)$. Proposition \ref{p:space-time-factor}
below implies, however, that $\what Y_u(t_0) - Y_u(t_0)$ and $Y_o(t)$ are
uncorrelated, for all $t_0 -m \le t \le t_0$. This completes the proof of {\it (i)}.

\medskip
\noi To prove {\it (ii)}, let $\theta$ be a constant vector of the same dimension
as $Y_u(t_0)$. Consider the random variable $\xi:= \theta^t Y_u(t_0)$.  It is well--known
that $\E (\xi | Y_o(t_0))$ is the best unbiased m.s.e.\ predictor of $\xi$ via $Y_o(t_0)$.  Thus
$$
 \theta^t \MSE(\what Y_u(t_0)| Y_o(t_0)) \theta 
 \le \theta^t \MSE(Y_u^*(t_0)| Y_o(t_0)) \theta,
$$
which implies {\it (ii)} and completes the proof.
\end{proof}

%
%

\noi The following result shows that if the space--time covariance structure of a random field factors, then
 the instantaneous standard kriging estimate is an optimal linear predictor even in the presence of
 additional data from the past.

\begin{proposition}\label{p:space-time-factor}
 Let $\{\xi(t,x)\}_{(t,x) \in T\times S}$ be a finite variance space--time random field.  Suppose that
$\E \xi(t,x) = 0$, for all $(t,x)\in T\times S$ and that
$$
{\rm Cov }(\xi(t,x),\xi(s,y)) = \gamma(t,s) R(x,y),
$$
for all $t,s\in T$ and $x,y\in S.$ Consider the data set ${\cal D} = \{ \xi(t_i,x_j),\ 0\le i \le m,\ 1\le j\le n\}$ 
of observations of the random field at times 
$t_0, t_1,\cdots, t_m$ and locations $x_1,\cdots x_n$. Then, there exist coefficients $\beta_j, 1\le j\le n$, such that
\begin{equation}\label{e:p:space-time-factor}
 \what \xi (t_0, x_0) := \sum_{j=1}^n \beta_j \xi(t_0,x_j)
\end{equation}
is the best linear in ${\cal D}$, unbiased predictor of $\xi(t_0,x_0)$.  In particular, we have
\begin{equation}\label{e:p:space-time-factor-beta}
 \vec \beta = \Sigma_{t_0}^{-} \vec c,\ \ \mbox{ where }\ \ 
  \Sigma_{t_0} = ({\rm Cov}(\xi(t_0,x_i),\xi(t_0,x_j)))_{n\times n}
\end{equation}
and $\vec c = ({\rm Cov}(\xi(t_0,x_0),\xi(t_0,x_i)))_{i=1}^n$.  Here $\Sigma_{t_0}^{-}$ denotes the Moore--Penrose
 generalized inverse
of the covariance matrix $\Sigma_{t_0}$.
\end{proposition}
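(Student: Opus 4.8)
The plan is to prove the statement by the orthogonality (projection) characterization of the best linear unbiased predictor. Since $\E\xi\equiv 0$, a linear combination of the data $\mathcal{D}$ is the best linear in $\mathcal{D}$ unbiased predictor of $\xi(t_0,x_0)$ if and only if the associated prediction error is uncorrelated with \emph{every} observation in $\mathcal{D}$. The entire content of the proposition is then that the purely spatial, present--time predictor \refeq{p:space-time-factor} already achieves this orthogonality against the whole data set, including the past observations $\xi(t_i,x_j)$ with $i\ge 1$, so that the past carries no additional linear information.

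First I would record the factored form of the covariances in play. Writing $R_n=(R(x_i,x_j))_{i,j=1}^n$ and $\vec r=(R(x_0,x_i))_{i=1}^n$, the present--time quantities of \refeq{p:space-time-factor-beta} are $\Sigma_{t_0}=\gamma(t_0,t_0)R_n$ and $\vec c=\gamma(t_0,t_0)\vec r$. I would dispose of the degenerate case $\gamma(t_0,t_0)=0$ (then every present observation has zero variance and is a.s.\ $0$, so the claim is vacuous) and assume $\gamma(t_0,t_0)>0$. The key algebraic point is that $\vec\beta=\Sigma_{t_0}^{-}\vec c$ solves the normal equations $\Sigma_{t_0}\vec\beta=\vec c$ \emph{exactly}: because the $(n+1)\times(n+1)$ covariance matrix of $(\xi(t_0,x_0),\xi(t_0,x_1),\dots,\xi(t_0,x_n))$ is positive semidefinite, its off--diagonal block $\vec c$ lies in the range of $\Sigma_{t_0}$, and applying the Moore--Penrose inverse to a vector in the range reproduces it.

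Next I would verify orthogonality of the error against the present--time observations, which is immediate: for each $k$,
$$
\cov\Big(\xi(t_0,x_0)-\textstyle\sum_j\beta_j\xi(t_0,x_j),\ \xi(t_0,x_k)\Big)=c_k-(\Sigma_{t_0}\vec\beta)_k=0
$$
by the normal equations. The essential step, where the factorization does the work, is orthogonality against the past. For $i\ge 1$ the factored covariance gives
$$
\cov\Big(\xi(t_0,x_0)-\textstyle\sum_j\beta_j\xi(t_0,x_j),\ \xi(t_i,x_k)\Big)=\gamma(t_0,t_i)\Big(R(x_0,x_k)-\textstyle\sum_j\beta_j R(x_j,x_k)\Big).
$$
The temporal factor $\gamma(t_0,t_i)$ pulls out cleanly, and the remaining spatial bracket is exactly $\gamma(t_0,t_0)^{-1}$ times the $k$th component of $\vec c-\Sigma_{t_0}\vec\beta$, which vanishes by the \emph{same} present--time normal equations. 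Hence the error is uncorrelated with every element of $\mathcal{D}$.

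Finally, by the projection theorem the predictor $\what\xi(t_0,x_0)=\sum_j\beta_j\xi(t_0,x_j)$ is the orthogonal projection of $\xi(t_0,x_0)$ onto the linear span of the full data $\mathcal{D}$, i.e.\ the best linear in $\mathcal{D}$ unbiased predictor, establishing \refeq{p:space-time-factor} with coefficients \refeq{p:space-time-factor-beta}. I expect the only genuinely delicate point to be the justification that $\vec c$ lies in the range of $\Sigma_{t_0}$, so that the Moore--Penrose solution solves the normal equations exactly; this is the standard positive--semidefinite range inclusion. Everything else reduces to pulling the scalar $\gamma(t_0,t_i)$ out of the covariance and reusing the present--time normal equations, so the factorization hypothesis is precisely what makes the past observations redundant.
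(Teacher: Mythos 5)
Your proof is correct and follows essentially the same route as the paper's: both rest on the projection (orthogonality) characterization of the b.l.u.p.\ and on the key observation that the factored covariance lets the temporal factor $\gamma(t_0,t_i)$ pull out, so the present--time normal equations already force orthogonality of the error to all past observations. If anything, you are slightly more careful than the paper on two minor points it glosses over --- the degenerate case $\gamma(t_0,t_0)=0$, and the range--inclusion argument showing that $\vec\beta=\Sigma_{t_0}^{-}\vec c$ solves the normal equations exactly when $\Sigma_{t_0}$ is singular.
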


\begin{proof} Consider the Hilbert space ${\cal L}^2$ of finite variance random variables with zero means and the usual 
inner product $\langle \xi,\eta\rangle :=\E \xi\eta$.  Consider the sub--space  $W = {\rm span}({\cal D}) \le {\cal L}^2$
and observe that the best linear in ${\cal D}$ unbiased predictor for $\xi(t_0,x_0)$ is the (unique) orthogonal projection
of $\xi(t_0,x_0)$ onto $W$.

Let $\what \xi(t_0,x_0)$ be the orthogonal projection of $\xi(t_0,x_0)$ onto the smaller subspace
${\rm span}\{\xi(t_0,x_j),\ 1\le j\le n\}$.  We then have that, for all $k =1,\cdots,n$
\begin{eqnarray*}
0 &=& {\rm Cov} {\Big(} (\xi(t_0,x_0) -  \sum_{j=1}^n \beta_j \xi(t_0,x_j)),\  \xi(t_0,x_k) {\Big)}\\
  &=& \gamma(t_0,t_0) R(x_0,x_k) - \sum_{j=1}^{n}\beta_j \gamma(t_0,t_0) R(x_j,x_k).
\end{eqnarray*}
This, since $\gamma(t_0,t_0)\not = 0$, shows that
\begin{equation}\label{e:p:space-time-factor-1}
  R(x_0,x_k) - \sum_{j=1}^{n}\beta_j R(x_j,x_k) = 0,\ \ \mbox{ for all } 1\le k \le n.
\end{equation}
We will show next that $\xi(t_0,x_0) - \sum_{j=1}^n \beta_j \xi(t_0,x_j)$ is orthogonal 
to $\xi(t_i,x_k)$ for all $i=1,\cdots,m$ and 
$k=1,\cdots,n$. Indeed,
\begin{eqnarray*}
& & {\rm Cov}{\Big(}\xi(t_0,x_0) - \sum_{j=1}^n \beta_j \xi(t_0,x_j), \xi(t_i,x_k) {\Big)}\\
 & &\ \ \ = \gamma(t_0,t_i) R(x_0,x_k) - \sum_{j=1}^n \beta_j \gamma(t_0,t_i) R(x_j,x_k) \\
 & &\ \ \ = \gamma(t_0,t_i)  {\Big(} R(x_0,x_k) - \sum_{j=1}^n \beta_j R(x_j,x_k) {\Big)} = 0,
\end{eqnarray*}
where the last term vanishes because of \eqref{e:p:space-time-factor-1}.  This implies that $\what \xi(t_0,x_0)$
is in fact the orthogonal projection of $\xi(t_0,x_0)$ onto $W$ and hence, it is the b.l.u.p.\ in terms of the data
in ${\cal D}$.

Relation \eqref{e:p:space-time-factor-beta} follows by solving \eqref{e:p:space-time-factor-1}.  
If $\Sigma_{t_0}$ is invertible, then the solution is certainly unique, otherwise the Moore--Penrose 
generalized inverse $\Sigma_{t_0}^{-}$ yields a particular natural solution.
\end{proof}

\medskip
\begin{proof}{(Proposition \ref{p:temporal-pred})}
  Part {\it (i)} is standard in one dimension (see eg Corollary 5.1.1 in \cite{brockwell:davis:1991}). 
For completeness, we will prove the result in the case when $Y_o(t)\in\bbR^d$. Let
$\Sigma_{oo} = A_o \Sigma_X A_o^t$ and observe that $\E (Y_o(t) -\mu_o)(Y_o(s)-\mu_o)^t 
= \gamma_X(|t-s|) \Sigma_{oo}$.

Consider now the zero mean Gaussian vectors: $\xi := Y_o(t_0+h) - \mu_o$ and
$$\eta  = ( Y_o(t_0)^t-\mu_o^t,\ \cdots,\ Y_o(t_0-m)^t-\mu_o^t)^t.
$$
Note that $\xi \sim {\cal N}(0,\Sigma_{oo})$ and $\eta \sim {\cal N}(0,\Gamma_{m+1}\otimes \Sigma_{oo})$, 
where '$\otimes$' denotes the Kronecker product:
$$
\Gamma_{m+1} \otimes \Sigma_{oo} = {\Big(} \gamma_X(|i-j|) \Sigma_{oo} {\Big)}_{(m+1)d \times (m+1)d},
$$
and where $\Sigma_{oo}$ is a $d\times d$ matrix.  By assumption, we have that $\Sigma_{oo}$ is invertible,
and as argued above, so is the Toeplitz matrix $\Gamma_{m+1}$, since $\gamma_X(k)\to 0,\ k\to\infty$
(Proposition 5.1.1 in \cite{brockwell:davis:1991}).  This implies that 
$\Sigma_{\eta\eta}^{-1} :=
(\Gamma_{m+1} \otimes \Sigma_{oo})^{-1} = \Gamma_{m+1}^{-1} \otimes \Sigma_{oo}^{-1}$ exists.
Therefore, the conditional distribution $\xi|\eta$ is as follows:
\begin{equation}\label{e:xi|eta}
 \xi|\eta \sim {\cal N} {\Big(} \Sigma_{\xi\eta} \Sigma_{\eta\eta}^{-1} \eta,\ \Sigma_{\xi\xi } -
  \Sigma_{\xi\eta} \Sigma_{\eta\eta}^{-1}
   \Sigma_{\eta\xi} {\Big)},
\end{equation}
where 
$$
\Sigma_{\xi\eta} = \E \xi \eta^t = \vec \gamma_{m+1}(h)^t \otimes \Sigma_{oo},\ \ \ 
\Sigma_{\eta\eta} = \E \eta \eta^t = \Gamma_{m+1}\otimes \Sigma_{oo},
$$
and
$$
 \Sigma_{\eta\xi}
 = \vec \gamma_{m+1}(h) \otimes \Sigma_{oo}.$$
By recalling the definitions of $\xi$ and $\eta$, we obtain that 
\begin{eqnarray*}
& &\E (Y_{o}(t_0+h) | {\cal D}) = \mu_o + \E (\xi|\eta) \\
& &  \ \ \ \  = \mu_o + (\vec \gamma_{m+1}(h)^t \otimes \Sigma_{oo})
(\Gamma_{m+1} ^{-1} \otimes \Sigma_{oo}^{-1}) \eta\\
& &\ \ \ \  =\mu_o + (\vec \gamma_{m+1}(h)^t \Gamma_{m+1}^{-1})\otimes 
 {\bbI}_d) \eta,
\end{eqnarray*}
which equals \eqref{e:temporal-pred-i}, and where in the last relation
we used the mixed--product property of the Kronecker product. By Relation \eqref{e:xi|eta}, we also have
\begin{eqnarray*}
& & \MSE(\what Y_o(t_0+h)|{\cal D}) = \Sigma_{\xi\xi} -\Sigma_{\xi\eta} \Sigma_{\eta\eta}^{-1} \Sigma_{\xi\eta}
 = \gamma_X(0) \Sigma_{oo} \\
& & \ \ -
  (\vec \gamma_{m+1}(h)^t \otimes \Sigma_{oo}) (\Gamma_{m+1}^{-1}\otimes \Sigma_{oo}^{-1})
(\vec \gamma_{m+1}(h) \otimes \Sigma_{oo}) \\
 &  & = \gamma_X(0) \Sigma_{oo} \\
 & & \ \ - (\vec \gamma_{m+1}(h)^t \Gamma_{m+1}^{-1} \gamma_{m+1}(h)) \otimes \Sigma_{oo} \equiv \sigma^2(h) \Sigma_{oo},
\end{eqnarray*}
by the mixed--product property of the Kronecker product and the fact that
$\gamma_{m+1}(h)^t \Gamma_{m+1}^{-1} \gamma_{m+1}(h)$ is a scalar. We have thus shown
 \eqref{e:temporal-pred-i-mse}.

\medskip
We now focus on proving {\it (ii)}. Consider $Y_{o}(t_0+h)$ and write
$$
 \what Y_u(t_0+h) := \mu_u + C (Y_o(t_0+h) - \mu_o) + C (\what Y_o(t_0+h) - Y_o(t_0+h)).
$$
As in Proposition \ref{p:std-Kriging}, one can show that  $Y_u(t_0+h) - C Y_o(t_0+h)$ is
independent from $Y_o(t)$, for all $t\le t_0+h$.  Therefore, 
\begin{eqnarray*}
\MSE(\what Y_u(t_0+h)|{\cal D}) &=&  \MSE(\what Y_u(t_0+h)|Y_o(t_0+h)) \\
 & & \ \ \ + C \MSE(\what  Y_o(t_0+h) | {\cal D}) C^t,
\end{eqnarray*}
where in the last relation $ \MSE(\what Y_u(t_0+h)|Y_o(t_0+h)) $ stands for the m.s.e.\ of the
standard Kriging estimator in Relation \eqref{e:p:std-Kriging-MSE} and where 
$\MSE(\what  Y_o(t_0+h) | {\cal D})$ is as in \eqref{e:temporal-pred-i-mse}.  This completes the proof 
of {\it (ii)}.

\medskip
To prove {\it (iii)} observe that the estimator in {\it (i)} is the conditional expectation 
of $Y_o(t_0+h)$ given ${\cal D}$ and it is therefore the best m.s.e.\ predictor. If $Y(t)$ is non--Gaussian,
this yields only the b.l.u.p.  By Proposition \ref{p:std-Kriging}, we have that
$$
 \E {\Big(} Y_u(t_0+h)\, |\, \{ Y_o(t),\ t\le t_0+h\} {\Big)} = \mu_u + C (Y_o(t_0+h) - \mu_o),
$$
on the other hand, by part {\it (i)}, we have that 
$$
\E {\Big(} \mu_u + C (Y_o(t_0+h) - \mu_o) | {\cal D} {\Big)} = \mu_u + C (\what Y_o(t_0+h) - \mu_o).
$$
The last two relations yield:
$
\E {\Big(} Y_u(t_0+h) | {\cal D} {\Big)} = \mu_u + C (\what Y_o(t_0+h) - \mu_o),
$
which shows that $\what Y_u(t_0+h)$ is the best m.s.e.\ predictor.  In the non--Gaussian case, this is
merely the b.l.u.p.
\end{proof}




\bibliographystyle{plain}


\begin{thebibliography}{10}

\bibitem{abry:veitch:1998}
P.~Abry and D.~Veitch.
\newblock Wavelet analysis of long range dependent traffic.
\newblock {\em {IEEE} Transactions on Information Theory}, 44(1):2--15, 1998.

\bibitem{antunes:pacheco:rocha:2002}
N. Antunes, A. Pacheco, and R. Rocha.
\newblock An integrated traffic model for multimedia wireless networks.
\newblock {\em Computer Networks}, 38: 25--41, 2002.

\bibitem{bardet:lang:oppenheim:philippe:stoev:taqqu:2003-livre}
J.-M. Bardet, G.~Lang, G.~Oppenheim, A.~Philippe, S.~Stoev, and M.S. Taqqu.
\newblock Semi-parametric estimation of the long-range dependence parameter : A
  survey.
\newblock In P.~Doukhan, G.~Oppenheim, and M.~S. Taqqu, editors, {\em {\it
  Theory and Applications of Long-range Dependence}}, pages 579--623.
  Birkh\"auser, 2003.

\bibitem{brockwell:davis:1991}
P.~J. Brockwell and R.~A. Davis.
\newblock {\em Time Series: Theory and Methods}.
\newblock Springer-Verlag, New York, 2nd edition, 1991.

\bibitem{chua:kolaczyk:crovella:2006}
D.B. Chua, E.D. Kolaczyk, and M.~Crovella.
\newblock Network kriging.
\newblock {\em Selected Areas in Communications, IEEE Journal on},
  24(12):2263--2272, Dec. 2006.

\bibitem{cressie:1993}
N.~Cressie.
\newblock {\em Statistics for Spatial Data: revised ed}.
\newblock John Wiley, New York, 1993.

\bibitem{crovella:bestavros:1996}
M.~E. Crovella and A.~Bestavros.
\newblock Self-similarity in {W}orld {W}ide {W}eb traffic: evidence and
  possible causes.
\newblock In {\em Proceedings of the 1996 ACM SIGMETRICS. International
  Conference on Measurement and Modeling of Complex Systems}, pages 160--169,
  May 1996.

\bibitem{crovella:taqqu:bestavros:1998}
M.~E. Crovella, M.~S. Taqqu, and A.~Bestavros.
\newblock Heavy-tailed probability distributions in the {W}orld {W}ide {W}eb.
\newblock In R.~Adler, R.~Feldman, and M.~S. Taqqu, editors, {\em {\it A
  Practical Guide to Heavy Tails: Statistical Techniques and Applications}},
  pages 3--25, Boston, 1998. Birkh{\"a}user.

\bibitem{dauria:samorodnitsky:2005}
B. D'Auria and G. Samorodnitsky.
\newblock Limit behavior of fluid queues and networks.
\newblock {\em Oper. Res.}, 53(6):933--945, 2005.

\bibitem{duffield:2004}
N.G. Duffield.
\newblock Sampling for passive Internet measurement: a review. 
\newblock {\em Statistical Science}, 19:472-–498, 2004. 

\bibitem{embrechts:maejima:2002}
P.~Embrechts and M.~Maejima.
\newblock {\em Selfsimilar processes},
\newblock Princeton Series in Applied Mathematics,
 {Princeton University Press}, 2002.

\bibitem{erramilli:pruthi:willinger:1995S}
A.~Erramilli, P.~Pruthi, and W.~Willinger.
\newblock Self-similarity in high-speed network traffic measurements: Fact or
  artifact?
\newblock In {\em Proceedings of the 12th Nordic Teletraffic Seminar NTS12,
  Espoo, Finland}, pages 299--310, 1995.

\bibitem{field:harder:harrison:2004}
\newblock Network traffic behaviour in switched Ethernet systems.
\newblock {\em Performance Evaluation}, 58: 243--360.

\bibitem{gaigalas:kaj:2003}
R.~Gaigalas and I.~Kaj.
\newblock Convergence of scaled renewal processes and a packet arrival model.
\newblock {\em Bernoulli}, 9(4):671--703, 2003.

\bibitem{gong:liu:misra:towsley:2005}
W.B. Gong, Y. Liu, V. Misra, and D. Towsley. 
\newblock Self-similarity and long range dependence on the internet: 
a second look at the evidence, origins and implications 
\newblock {\em Computer Networks}, 48: 377-399, 2005.

\bibitem{hohn:veitch:abry:2003}
N.\ Hohn, D.\ Veitch, and P.\ Abry.
\newblock Cluster processes, a natural language for network traffic
\newblock {\em IEEE Transactions on Signal Processing}, 51(8): 2229--2244, 2003.

\bibitem{I2}
Internet2:
\newblock {\tt http://www.internet2.edu/observatory/}

\bibitem{lakhina:etal:2004}
A. Lakhina, K. Papagiannaki, M. Crovella,
C. Diot, E.D. Kolaczyk, and N. Taft􏲩.
\newblock Structural analysis of network trafic flows.
\newblock {\em Proceedings of Sigmetrics}, 2004. 

\bibitem{lawrence:etal:2006}
E. Lawrence, G. Michailidis, V.N. Nair and B. Xi.
\newblock Network tomography: a review and recent developments. 
\newblock in {\em Frontiers in Statistics}, J. Fan and H. Koul (eds),
345--364, 2006.

\bibitem{leland:taqqu:willinger:wilson:1993Sig}
\newblock On the self-similar nature of {E}thernet traffic.
\newblock {\em Computer Communications Review}, 23:183--193, 1993.
\newblock Proceedings of the ACM/SIGCOMM'93, San Francisco, September 1993.

\bibitem{lombardo:morabito:schembra:2004}
A. Lombardo, G. Morabito, and G. Schembra.
\newblock A novel analytical framework compounding 
statistical traffic modeling and aggregate-Level 
service curve disciplines: network performance and efficiency implications. 
\newblock {\em IEEE/ACM Trans. on Networking}, 12: 443-456, 2004.

\bibitem{mikosch:resnick:rootzen:stegeman:2002}
T.~Mikosch, S.~Resnick, H.~Rootz{\'e}n, and A.~Stegeman.
\newblock Is network traffic approximated by stable {L}{\'e}vy motion or
  fractional {B}rownian motion?
\newblock {\em The Annals of Applied Probability}, 12(1):23--68, 2002.

\bibitem{mikosch:resnick:2006}
T. Mikosch, and S. Resnick.
\newblock Activity rates with very heavy tails.
\newblock {\em Stochastic Process. Appl.}, 116(2):131--155, 2006.

\bibitem{mikosch:samorodnitsky:2007}
T. Mikosch, and G. Samorodnitsky.
\newblock Scaling limits for cumulative input processes.
\newblock {\em Math. Oper. Res.}, 32(4):890--918, 2007.

\bibitem{park:kim:crovella:1996R}
K.~Park, G.~Kim, and M.~E. Crovella.
\newblock On the relationship between file sizes, transport protocols, and
  self-similar network traffic.
\newblock In {\em Proceedings of the Fourth International Conference on
  Networks Protocols (ICNP'96)}, October 1996.

\bibitem{park:willinger:2000}
K.~Park and W.~Willinger, editors.
\newblock {\em Self-Similar Network Traffic and Performance Evaluation}.
\newblock J. Wiley \& Sons, Inc., New York, 2000.

\bibitem{paschalidis:smaragdakis:2009}
Y. Paschalidis, and G. Smaragdakis.
\newblock Spatio-temporal network anomaly detection by 
assessing deviations of empirical measures.
\newblock {\em IEEE/ACM Trans. on Networking}, 17: 685--697, 2009.

\bibitem{pipiras:taqqu:levy:2004}
V. Pipiras, M.~S. Taqqu, and J.~B. Levy.
\newblock Slow, fast and arbitrary growth conditions for renewal-reward
  processes when both the renewals and the rewards are heavy-tailed.
\newblock {\em Bernoulli}, 10(1):121--163, 2004.

\bibitem{rolls:campos:michailidis:2005}
D. Rolls, F. Hernandez-Campps, and G. Michailidis.
\newblock Queueing analysis of network traffic: methodology and 
visualization tools.
\newblock {\em Computer Networks}, 48:447--473, 2005.

\bibitem{samorodnitsky:taqqu:1994book}
G.~Samorodnitsky and M.~S. Taqqu.
\newblock {\em {\it {S}table {N}on-{G}aussian {P}rocesses: {S}tochastic
  {M}odels with {I}nfinite {V}ariance}}.
\newblock Chapman and Hall, New York, London, 1994.

\bibitem{singhal:michailidis:2007}
H. Singhal, and G. Michailidis.
\newblock Identifiability of flow distributions from link measurements with 
applications to computer networks. 
\newblock {\em Inverse Problems}, 23: 1821--1850, 2007.

\bibitem{stoev:pipiras:taqqu:2002}
S.~Stoev, V.~Pipiras, and M.~S. Taqqu.
\newblock Estimation of the self-similarity parameter in linear fractional
  stable motion.
\newblock {\em Signal Processing}, 82:1873--1901, 2002.

\bibitem{stoev:taqqu:park:michailidis:marron:2006}
S.~Stoev, M.~Taqqu, C.~Park, G.~Michailidis, and J.~S. Marron.
\newblock {LASS}: {a} tool for the local analysis of self-similarity.
\newblock {\em Computational Statistics and Data Analysis}, 50:2447--2471,
  2006.

\bibitem{stoev:taqqu:2003w}
S.~Stoev and M.~S. Taqqu.
\newblock Wavelet estimation for the {H}urst parameter in stable processes.
\newblock In Govindan Rangarajan and Mingzhou Ding, editors, {\em Processes
  with {L}ong-{R}ange {C}orrelations: {T}heory and {A}pplications}, pages
  61--87, Berlin, 2003. Springer Verlag.
\newblock Lecture Notes in Physics 621.

\bibitem{stoev:taqqu:park:marron:2005}
S.~Stoev, M.~S. Taqqu, C.~Park, and J.~S. Marron.
\newblock On the wavelet spectrum diagnostic for {H}urst parameter estimation
  in the analysis of {I}nternet traffic.
\newblock {\em Computer Networks}, 48:423--445, 2005.

\bibitem{stoev:taqqu:2005A}
S. Stoev and M.~S. Taqqu.
\newblock Asymptotic self-similarity and wavelet estimation for long-range
  dependent fractional autoregressive integrated moving average time series
  with stable innovations.
\newblock {\em J. Time Ser. Anal.}, 26(2):211--249, 2005.

%
%

\bibitem{taqqu:teverovsky:1996S}
M.~S. Taqqu and V.~Teverovsky.
\newblock Semi-parametric graphical estimation techniques for long-memory data.
\newblock In P.~M. Robinson and M.~Rosenblatt, editors, volume 115 of
 {\em Lecture Notes in Statistics}, pp 420--432, New York, 1996. Springer-Verlag.

\bibitem{taqqu:willinger:sherman:1997}
M.~S. Taqqu, W.~Willinger, and R.~Sherman.
\newblock Proof of a fundamental result in self-similar traffic modeling.
\newblock {\em Computer Communications Review}, 27(2):5--23, 1997.

\bibitem{vaughan:stoev:michailidis:2009}
J. Vaughan, S. Stoev, and G. Michailidis.
\newblock Global modeling and prediction of computer network traffic.
\newblock Working paper, 2009.

\bibitem{willinger:taqqu:sherman:wilson:1997TON}
W.~Willinger, M.~S. Taqqu, R.~Sherman, and D.~V. Wilson.
\newblock Self-similarity through high-variability: statistical analysis of
  {E}thernet {LAN} traffic at the source level.
\newblock {\em IEEE/ACM Transactions in Networking}, 5(1):71--86, 1997.

\bibitem{xu:zhang:bhattacharyya:2005}
K. Xu, Z.L. Zhang, and S. Bhattacharyya.
\newblock Profiling Internet backbone traffic: behavior models and applications.
\newblock {\em Proceedings of Sigcomm}, 2005. 

\bibitem{yang:michailidis:2007}
L. Yang, and G. Michailidis.
\newblock Sample based estimation of network traffic flow characteristics.
\newblock {\em Proceedings of Infocom}, 2007
\end{thebibliography}

\end{document}